\let\mathbb=\mathds % I much prefer the dsfont over the bbfont
\def\diag{\mathop{\mathrm{diag}}}  % For diagonal matrices
\newtheorem{theorem}{Theorem}
\newtheorem{prop}{Proposition}
\newtheorem{assum}{Assumptions}
\newtheorem{lem}{Lemma}
\newtheorem{remark}{Remark}
\newtheorem{exam}{Example}
\begin{document}

% ===============================================
%
%
% TITLE
%
%
% ===============================================
\title{\LARGE \bf
Average Consensus in the Presence of Delays and Dynamically Changing Directed Graph Topologies\footnote{Preliminary results of the work in this paper have been presented in \cite{2011:Christoforos-Themis} which only included the discussions on bounded delay and did not provide all details for the proofs.}}

\author{Christoforos~N.~Hadjicostis,~\IEEEmembership{Senior Member,~IEEE}  \thanks{Christoforos~N.~Hadjicostis is with the Department of Electrical and Computer Engineering at the University of Cyprus, Nicosia, Cyprus and also with the Department of Electrical and Computer Engineering at the University of Illinois, Urbana-Champaign, IL, USA. E-mail:{\tt~chadjic@ucy.ac.cy}.} and
Themistoklis~Charalambous,~\IEEEmembership{Member,~IEEE}% <-this % stops a space
        \thanks{Themistoklis~Charalambous was formerly with the Department of Electrical and Computer Engineering at the University of Cyprus, Nicosia, Cyprus. He is currently with the Automatic Control Lab, Electrical Engineering Department and ACCESS Linnaeus Center, Royal Institute of Technology (KTH), Stockholm, Sweden.  Corresponding author's address: Osquldas v\"{a}g 10, 100-44 Stockholm, Sweden. E-mail: {\tt themisc@kth.se}.}% <-this % stops a space
}

%\maketitle
\maketitle
\thispagestyle{empty}
\pagestyle{empty}

% ===============================================
%
%
% ABSTRACT
%
%
% ===============================================
\begin{abstract}
%In this paper we study asynchronous discrete-time distributed strategies in a networked system at the presence of time-delays and switching topologies.
Classical approaches for asymptotic convergence to the global average in a distributed fashion typically assume timely and reliable exchange of information between neighboring components of a given multi-component system. These assumptions are not necessarily valid in practical settings due to varying delays that might affect transmissions at different times, as well as possible changes in the underlying interconnection topology (e.g., due to component mobility). In this work, we propose protocols to overcome these limitations. We first consider a fixed interconnection topology (captured by a possibly directed graph) and propose a discrete-time protocol that can reach asymptotic average consensus in a distributed fashion, despite the presence of arbitrary (but bounded) delays in the communication links. The protocol requires that each component has knowledge of the number of its out-neighbors (i.e., the number of components to which it sends information). We subsequently extend the protocol to also handle changes in the underlying interconnection topology and describe a variety of rather loose conditions under which the modified protocol allows the components to reach asymptotic average consensus. The proposed algorithms are illustrated via examples.
% and are also compared against existing algorithms. 

{\bf Keywords:} Average consensus, digraphs, bounded delays, changing interconnection topology, ratio consensus, weak convergence.
\end{abstract}

% ===============================================
%
%
% INTRODUCTION
%
%
% ===============================================
\section{INTRODUCTION}

% General statement
Centralized approaches in multi-component systems require the collection of measurements or other information to a central location (at possibly high communication and computational cost), the computation of quantities of interest at this central location, and then the dissemination of these quantities to (a subset of) the components. This approach is often inefficient or even unrealizable (e.g., in ad-hoc networks that do not posses explicit routing mechanisms). Cooperative distributed control/coordination algorithms and protocols have therefore received tremendous attention, especially during the last decade. Several diverse research communities (e.g., biology, physics, control, communication, and computer science) have made important contributions that have resulted in many recent advances in so called consensus-based approaches (see, for example, \cite{2007:olfati-saber_consensus}) and in distributed computation of functions of geographically dispersed data, also known as in-network computation (see, for example, \cite{2006:inNetworkComputation} and references therein).

A distributed system or network consists of a set of components (nodes) that can share information via connection links (edges), forming a directed interconnection topology (digraph). In general, the objective of a consensus problem is to have all agents agree upon a certain ({\em a priori} unknown) quantity of interest that is typically a function of some values that the nodes initially posses. When the agents (asymptotically) reach agreement to the same value, we say that the distributed system (asymptotically) reaches consensus. The problem of convergence of discrete-time consensus algorithms was initially targeted by Tsitsikis \emph{et al.} \cite{1986:tsitsiklis} and subsequently by many other researchers (see, for example, \cite{2004:Murray, 2005:wei_ren_consensus, 2005:Moreau, 2006:angeli_stability, 2008:Bliman, 2010:Franceschelli, 2011:Christoforos, 2011:cai,2012:damiano}). Typical applications include network coordination problems involving self-organization, formation of patterns, parallel processing, and distributed optimization, such as motion of mobile agents (e.g., coordination of unmanned air vehicles, autonomous underwater vehicles, or satellites) and averaging of measurements in wireless sensor networks. 

Common challenges in consensus problems include the handling of node failures (e.g., due to the draining of batteries in wireless sensor networks), transmission delays on the transfer of data between agents, packet losses in wireless communication networks, and inaccurate sensor measurements. As a result, agreement problems have been studied in networks of dynamical agents, possibly with directed information flow, under delays and/or changing topologies. It is worth pointing out that convergence of consensus algorithms can usually be established under relatively weak requirements and that consensus protocols have been relatively successful in addressing disturbances due to delays (e.g., \cite{2005:lei_fang, 2006:angeli_stability, 2008:Bliman}), packet drops (e.g., \cite{2007:Patterson,2009:Fagnani}), and changing interconnections (e.g., \cite{2005:wei_ren_consensus, 2005:Moreau}), or a combination of them (e.g., \cite{2004:Murray,2006:XiaoWang, 2010:Nedic}). What is different in this paper is that we devise a protocol that is able to overcome such limitations while reaching consensus to the exact average of the values that the nodes initially posses. More specifically, by utilizing the suggested protocol, each agent reaches the exact average of the initial values of all the agents, even in the presence of (bounded) transmission delays and changes in the interconnection topology. This is in sharp contrast with the above mentioned works in which consensus is reached to a value that is typically a function of the disturbances involved (and thus cannot be guaranteed a priori).

% Literature review
The \emph{average consensus} problem studied in this paper aims to have the agents reach agreement to the average of their initial values (see, for example, \cite{2003:jadbabaie_coordination, 2003:Xiao_fastlinear}). It has been shown in \cite{2004:Murray} that, under a fixed interconnection topology, average consensus can be achieved by performing a linear iteration in a distributed fashion if the interconnection topology is both strongly connected and balanced, while 
%gossip algorithms \cite{2005:Boyd, 2009:Nedic} and 
convex optimization \cite{2009:Johansson,2011:Zanella, 2012:zanellaDamiano} requires update matrices that are doubly stochastic. Even though various approaches have been proposed for forming balanced matrices (e.g., \cite{2009:Cortes,2012:Rikos}) and primitive doubly stochastic matrices (e.g., \cite{2010:Cortes, 2011:Christoforos}), which can subsequently be used for reaching average consensus, most existing schemes are not applicable in digraphs and/or fail in the presence of delays and changing interconnection topology. In particular, among the limited existing algorithms that guarantee convergence to the exact average in a digraph (e.g., \cite{2010:christoforos,2011:Franceschelli,2012:Ishii}), few of them have addressed delays and topology changes, and it is unclear how/if these techniques can be modified to overcome such disturbances.

The methodology developed in this paper is based on an algorithm suggested in \cite{2010:christoforos} that solves the average consensus problem in a static digraph using a linear iteration strategy in which each node $v_j$ distributively sets the weights on its self-link and outgoing-links to be $\frac{1}{1+\mathcal{D}_j^+}$ (where $\mathcal{D}_j^+$ is the out-degree of node $v_j$). More generally, the set of weights needs to adhere to the graph structure (i.e., be positive on each edge ---including self-edges--- and zero otherwise), but is otherwise unrestricted as long as it forms a primitive column stochastic matrix $P$. 
More generally, the set of weights needs to adhere to the graph structure (i.e., be positive on each edge -- including self-edges -- and zero otherwise), but is otherwise unrestricted as long as it forms a primitive column stochastic matrix $P$. Using the weights in matrix $P$, average consensus is reached in \cite{2010:christoforos} via \emph{ratio consensus}, i.e., two linear iterations (with appropriately chosen initial conditions) that run simultaneously so that the average can be obtained at each node by taking the ratio of the two values it maintains for each of the two iterations. An equivalent approach for gossiping algorithms was also proposed in \cite{2010:Benezit}, which is a generalization of the gossiping algorithm proposed in \cite{2003:Kempe}. The idea of ratio consensus can be traced back much earlier (see the discussion on weak convergence at the ``Bibliography and Discussion to \S\S3.1-3.2'', pp. 98, in \cite{2006:Seneta}).

% In this paper...
In this paper, we investigate the problem of discrete-time \emph{average} consensus in a multi-component system under a directed interconnection topology in the presence of bounded delays in the communication links and changing interconnections (with communication links being added or removed, as in a mobile network setting). First, we consider a fixed topology and we devise a protocol, based on ratio consensus, where each node updates its information state by linearly combining the available (possibly delayed) information received by its neighbors using constant positive weights. We establish that, unlike other consensus approaches, this robustified version of ratio consensus, henceforth called \emph{robustified ratio consensus}, converges to the exact average of the nodes initial values, despite the presence of arbitrary but bounded time-delays. Then, we allow the communication links to change (at the same time we also allow communication delays in the network) and enhance the proposed robustified ratio consensus algorithm so that the algorithm is immune to arbitrary changing interconnection topology and delays. We show that it is possible to asymptotically reach consensus to the exact average in a distributed fashion for a network with changing communication links and delays, as long as the delays are bounded and the unions of digraphs over consecutive time intervals form strongly connected digraphs infinitely often. 

% \textcolor{red}{THEMI, remove the material below?}
% For the simpler case of switching among undirected topologies, we further show that a simple weighting scheme using the proposed algorithm can not only handle bounded delays and switching, but also outperforms (in terms of speed of convergence) existing simple weighting schemes that rely on doubly stochastic weight matrices.

% The rest of the paper...
The remainder of the paper is organised as follows. In Section~\ref{preliminaries}, the notation used throughout the paper is provided, along with some background on graph theory that is needed for our subsequent development. This section also outlines the algorithm proposed in this paper. In Section~\ref{model}, we describe our model for communication link delays and our model for changing interconnection topology in the multi-agent system. In Section~\ref{delays} we consider a fixed topology and study the behavior of our algorithm in the presence of delays. In Section~\ref{switching} we consider a fixed set of nodes and allow changes in the communication links among them in order to study the behavior of our algorithm in the presence of both interconnection topology changes and delays. 
% In Section \ref{convergence},  we consider switching topologies (initially without delays and later with delays) and demonstrate that our approach converges faster than existing approaches that require classical consensus algorithms relying on doubly stochastic weight matrices. 
Finally, Section \ref{conclusions} summarizes the results of the paper and draws directions for future research.

% ===============================================
%
%
% PRELIMINARIES
%
%
% ===============================================
\section{NOTATION AND PRELIMINARIES}\label{preliminaries}

\subsection{Notation}

The sets of real, integer and natural numbers are denoted by $\mathds{R}$, $\mathds{Z}$ and $\mathds{N}$, respectively; their nonnegative counterparts are denoted by the subscript $+$ (e.g., $\mathds{R}_{+}$). Vectors are denoted by small letters whereas matrices are denoted by capital letters. The transpose of matrix $A$ is denoted by $A^{T}$.
% For two symmetric matrices $A$ and $B$, $A \succ ( \succeq)B$ means that $A-B$ is (semi-)positive definite. 
By $\mathbb{1}$ we denote the all-ones vector and by $I$ we denote the identity matrix (of appropriate dimensions).
%$|A|$ is the elementwise absolute value of the matrix (i.e. $|A|\triangleq[|A_{ij}|]$), $A(<) \leq B$ is the (strict) element-wise inequality between matrices A and B.
A matrix whose elements are nonnegative, called nonnegative matrix, is denoted by $A \geq 0$, and a matrix whose elements are positive, called positive matrix, is denoted by $A>0$.

In multi-component systems with fixed communication links (edges), the exchange of information between components (nodes) can be conveniently captured by a digraph $\mathcal{G}(\mathcal{V}, \mathcal{E})$ of order $n$ $(n \geq 2)$, where $\mathcal{V} = \{v_1,v_2,\ldots,v_n\}$ is the set of nodes and $\mathcal{E} \subseteq \mathcal{V} \times \mathcal{V}$ is the set of edges. A directed edge from node $v_i$ to node $v_j$ is denoted by $\varepsilon_{ji} \triangleq (v_j, v_i)\in \mathcal{E}$ and represents a communication link that allows node $v_j$ to receive information from node $v_i$. A graph is said to be undirected if and only if $\varepsilon_{ji} \in \mathcal{E}$ implies $\varepsilon_{ij}  \in \mathcal{E}$. In this paper, links are not required to be bidirectional, i.e. we deal with digraphs; for this reason, we use the terms ``graph" and ``digraph" interchangeably. Note that by convention and for notational purposes, we assume that the given graph does not include any self-loops (i.e., $\varepsilon_{jj} \notin \mathcal{E}$ for all $v_j \in \mathcal{V}$) although each node $v_j$ obviously has a link (access) to its own information. A digraph is called \emph{strongly} connected if there exists a path from each vertex $v_i$ in the graph to each vertex $v_j$ ($v_j \neq v_i$). In other words, for any $v_j, v_i \in \mathcal{V}$, $v_j\neq v_i$, one can find a sequence of nodes $v_i = v_{l_1}$, $v_{l_2}$, $v_{l_3}$, $\ldots$, $v_{l_t}=v_j$ ($t\geq 2$) such that link $(v_{l_{k+1}}, v_{l_{k}}) \in \mathcal{E}$ for all $k = 1, 2, \ldots, t-1$.

All nodes that can transmit information to node $v_j$ directly are said to be in-neighbors of node $v_j$ and belong to the set $\mathcal{N}^{-}_j=\{ v_i \in \mathcal{V} \; | \; \varepsilon_{ji} \in \mathcal{E} \}$. The cardinality of $\mathcal{N}^{-}_j$, is called the \emph{in-degree} of $v_j$ and is denoted by $\mathcal{D}^{-}_{j}=\left| \mathcal{N}^{-}_j \right|$. The nodes that receive information from node $v_j$ belong to the set of out-neighbors of node $v_j$, denoted by $\mathcal{N}^{+}_j=\{ v_l \in \mathcal{V} \; | \; \varepsilon_{lj} \in \mathcal{E} \}$. The cardinality of $\mathcal{N}^{+}_j$, is called the \emph{out-degree} of $v_j$ and is denoted by $\mathcal{D}^{+}_{j}= \left| \mathcal{N}^{+}_j \right|$.

In the algorithms we will consider, we will associate a positive weight $p_{ji}$ for each $(j,i)$ such that edge $\varepsilon_{ji} \in \mathcal{E} \cup \{ (v_j, v_j) \; | \: v_j \in \mathcal{V} \}$. The nonnegative matrix $P = [p_{ji} ] \in \mathbb{R}_{+}^{n\times n}$ (with $p_{ji}$ as the entry at its $j$th row, $i$th column position) is a weighted adjacency matrix (also referred to as weight matrix) that has zero entries at locations that do not correspond to directed edges (or self-edges) in the digraph. In other words, apart from the main diagonal, the zero/nonzero structure of the adjacency matrix $P$ matches exactly the given set of links in the digraph. 
%
%In a synchronous setting, each node $v_j$ updates and sends its information to its neighbors at discrete times $t_0, t_1, t_2, \ldots$. We index nodes' information states and any other information at time $t_k$ by $k$. 
%
We use $x_j[k]\in \mathbb{R}$ to denote the information state of node $j$ at time step $k$. We first consider a static network where the graph connectivity remains largely invariant (as it is usually the case for distributed resources in applications, such as the power grid \cite{2005:powerGrid, 2010:christoforos}). At each time step $k$, each node $v_j$ updates its information state to $x_{j}[k+1]$ as a weighted linear combination of its own value $x_j[k]$ and the available information received by its neighbors $\{ x_{i}[k] \; | \; v_i \in \mathcal{N}^{-}_j \}$. The positive constant $p_{ji}$ captures the weight of the information inflow from agent $v_i$ to agent $v_j$. In this work, since we deal with digraphs, we assume that each node $v_j$ chooses its self-weight $p_{jj}$ and the weights $p_{lj}$ on its out-going links $v_l\in\mathcal{N}^{+}_j$. Hence, in its general form, each node updates its information state $x_j[k+1]$ according to 
\begin{eqnarray}
x_{j}[k+1]  = p_{jj} x_{j}[k] + \sum_{v_i \in \mathcal{N}^{-}_j} p_{ji}  x_{i}[k]   =  p_{jj} x_{j}[k] + \sum_{v_i \in \mathcal{N}^{-}_j} x_{j\leftarrow i}[k], \; k=0,1,2, \ldots \label{eq:1_1}
\end{eqnarray}
where $x_{j\leftarrow i}[k]\triangleq p_{ji}  x_{i}[k]$, $x_{i}[k] \in \mathbb{R}$, is the value sent to node $v_j$ by node $v_i$ at time step $k$. [Note that, since node $v_i$ chooses the weight $p_{ji}$, it is more convenient to sent $x_{j \leftarrow i}[k]$ instead of separately sending $p_{ji}$ and $x_i[k]$.] If we let $x[k]=(x_1[k] \ \ x_2[k] \ \ \ldots  \ \  x_n[k] )^T$ and $P = [p_{ji}] \in \mathbb{R}_{+}^{n\times n}$, then \eqref{eq:1_1} can be written in matrix form as
\begin{align}\label{eq:2_2}
x[k+1] =P x[k] .
\end{align}
Note that, with the exception of the diagonal entries, we have $p_{ji} =0$, $j \neq i$, if and only if $(v_j, v_i) \notin \mathcal{E}$. We say that the nodes asymptotically reach average consensus if $\lim_{k \rightarrow \infty} x_j[k] = \frac{\sum_i x_i[0]}{n}$ for all $v_j \in \mathcal{V}$. The necessary and sufficient conditions for \eqref{eq:2_2} to reach average consensus are the following \cite{2003:Xiao_fastlinear}: (a) $P$ has a simple eigenvalue  $\lambda_i(P)=1$ with left eigenvector  $\mathbb{1}^T$ and right eigenvector $\mathbb{1}$, and (b) all other eigenvalues of P ($\lambda_j(P), j\neq i$)  have magnitude less than 1 ($|\lambda_j(P)|<1$). If $P\geq 0$ (as in our case), the necessary and sufficient condition is that $P$ be a primitive doubly stochastic matrix. 
\begin{comment}
In an undirected graph (where $\mathcal{D}_j^+=\mathcal{D}_j^- \equiv D_j$ for all $v_j \in \mathcal{V}$), assuming each node knows $n$ (or an upper bound $n{'} \geq n$) and the graph is connected, each node $v_j$ can distributively choose the weights on its outgoing links to be $\frac{1}{n'}$ and set its diagonal to be $1-\frac{\mathcal{D}_j}{n'}$. One can easily check that the resulting $P$ is primitive doubly stochastic (and symmetric). Another choice in the case of undirected graphs is the set of the so-called Metropolis weights \cite{2005:Metropolis} which again result in a symmetric nonnegative matrix that is primitive and doubly stochastic.
\begin{align*}
p_{ji}= 
\begin{cases} 
\frac{1}{\max(\mathcal{D}_j, \mathcal{D}_i)}, & \text{if } (v_j, v_i)\in\mathcal{E}, v_j \neq v_i \; , \\
1-\sum_{l\in \mathcal{N}_j} p_{lj}, & \text{if } v_j=v_i \; , \\
0, &\text{otherwise,}
\end{cases}
\end{align*}
where $\mathcal{N}_j = \mathcal{N}_j^+ = \mathcal{N}_j^-$ is the set of neighbors of node $v_j$ and $\mathcal{D}_j = \mathcal{D}_j^+ = \mathcal{D}_j^-$ is their number. Again, one can easily check that the resulting $P$ is primitive doubly stochastic (and symmetric). In a digraph, however, these choices do not necessarily lead to a doubly stochastic weight matrix.
\end{comment}

To capture dynamically changing topologies we will assume that we are given a {\em fixed} set of components $\mathcal{V} = \{ v_1, v_2, \ldots, v_n \}$ but the set of edges among them might change at various points in time. This results in a sequence of digraphs of the form $\mathcal{G}[k] = (\mathcal{V}, \mathcal{E}[k])$. Given a collection of digraphs $\mathcal{G}[1], \ldots, \mathcal{G}[m]$ (for some $m \geq 1$) of the form $\mathcal{G}[k]=(\mathcal{V}, \mathcal{E}[k])$, $k=1,2,\ldots, m$, the \emph{union digraph} is defined as $\mathcal{G}_{1,2,\ldots,m}=(\mathcal{V}, \cup_k \mathcal{E}[k])$. The collection of digraphs is said to be \emph{jointly strongly connected}, if its corresponding union graph $\mathcal{G}_{1,2,\ldots,m}$ forms a strongly connected graph. A strongly connected graph certainly emerges if at least one of the graphs in the collection is strongly connected, but it could also emerge even if none of the graphs forming the union is strongly connected.

\subsection{Ratio Consensus}

In \cite{2010:christoforos}, the average consensus problem in a digraph is solved using ratio consensus. Each node $v_j$ distributively sets positive weights on its self-link and out-going links so that the resulting weight matrix $P$ is primitive column stochastic, but not necessarily row stochastic. [Since the graph is strongly connected, it will be sufficient for node $v_j$ to choose $p_{lj} > 0$ for $v_l \in \mathcal{N}_j^+ \cup \{ v_j \}$ (zero otherwise) such that $\sum_{v_l \in \mathcal{N}_j^+ \cup \{ v_j \}} p_{lj} = 1$.] Average consensus is then reached by using this weight matrix to run two linear iterations with appropriately chosen initial conditions and by having each node take the ratio of the two values it maintains (one for each iteration). The algorithm is stated below for a specific choice of weights, which assumes that each node knows its out-degree and sets its link weights to $\frac{1}{1+\mathcal{D}_j^+}$ (this has the additional advantage of allowing broadcasts, since the transmissions $x_{l \leftarrow j}[k] \triangleq p_{lj} x_{j}[k]$ are identical for all $v_l \in \mathcal{N}_j^+ \cup \{ v_j \}$). Note, however, that the algorithm works for any set of weights that adhere to the graph structure and form a primitive column stochastic weight matrix.

\begin{lem}\label{lemma_christoforos}\cite{2010:christoforos}
Consider a strongly connected digraph $\mathcal{G}(\mathcal{V}, \mathcal{E})$. Let $y_j[k]$ and $z_j[k]$ (for all $v_j \in \mathcal{V}$ and $k=0,1,2,\ldots$) be the result of the iterations
\begin{align}\label{eq:4}
y_j[k+1]=p_{jj} y_j[k]+ \sum_{v_i \in \mathcal{N}^{-}_j} p_{ji} y_i[k] \; , \\  
z_j[k+1]=p_{jj} z_j[k]+ \sum_{v_i \in \mathcal{N}^{-}_j} p_{ji} z_i[k] \; ,
\end{align}
where $P = [p_{ji}]$ forms a primitive column stochastic matrix, and the initial conditions are $y[0]=(y_0(1) \ \ y_0(2) \ \ldots \ y_0(|\mathcal{V}|))^T\triangleq y_0$ and  $z[0]=\mathbb{1}$. Then, the protocol asymptotically converges to
$$
\displaystyle \lim_{k\rightarrow \infty} \mu_j[k]=\frac{\sum_{v_\ell \in \mathcal{V}} y_0(\ell)}{|\mathcal{V}|} \; , \;\;\forall v_j \in \mathcal{V} \; ,
$$
where
$
\displaystyle \mu_j[k]=\frac{y_j[k]}{z_j[k]} \; .
$
\end{lem}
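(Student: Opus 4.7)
The plan is to reduce the statement to the classical Perron–Frobenius limit behavior of primitive column stochastic matrices. Since the two iterations in \eqref{eq:4} have the same transition matrix $P$, both can be written as $y[k]=P^k y[0]$ and $z[k]=P^k z[0]=P^k\mathbb{1}$, and the whole argument reduces to understanding $P^k$ as $k\to\infty$.

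First I would record the two key spectral facts about $P$. Because $P$ is column stochastic, $\mathbb{1}^T P=\mathbb{1}^T$, so $\mathbb{1}^T$ is a left eigenvector associated with the eigenvalue $1$. Because $P$ is primitive (nonnegative, irreducible, aperiodic), Perron–Frobenius yields that $1$ is a simple eigenvalue strictly larger in modulus than every other eigenvalue, and that the corresponding right eigenvector $\pi$ can be chosen componentwise positive. Normalizing so that $\mathbb{1}^T\pi=1$, a standard consequence is
\begin{equation*}
\lim_{k\to\infty} P^k \;=\; \pi\,\mathbb{1}^T.
\end{equation*}

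Next I would substitute this limit into the closed forms of the two iterations. For the numerator,
\begin{equation*}
\lim_{k\to\infty} y[k] \;=\; \pi\,\mathbb{1}^T y_0 \;=\; \Bigl(\textstyle\sum_{v_\ell\in\mathcal{V}} y_0(\ell)\Bigr)\pi,
\end{equation*}
and for the denominator, using $z[0]=\mathbb{1}$,
\begin{equation*}
\lim_{k\to\infty} z[k] \;=\; \pi\,\mathbb{1}^T\mathbb{1} \;=\; |\mathcal{V}|\,\pi.
\end{equation*}
Taking the $j$-th component and exploiting $\pi_j>0$, the limit of the ratio equals the ratio of the limits:
\begin{equation*}
\lim_{k\to\infty}\mu_j[k] \;=\; \frac{\pi_j \sum_{v_\ell\in\mathcal{V}} y_0(\ell)}{|\mathcal{V}|\,\pi_j} \;=\; \frac{\sum_{v_\ell\in\mathcal{V}} y_0(\ell)}{|\mathcal{V}|},
\end{equation*}
which is exactly the claimed limit, independent of $j$.

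The only delicate point is that the ratio $\mu_j[k]=y_j[k]/z_j[k]$ must be well-defined along the trajectory, not merely in the limit. This reduces to showing $z_j[k]>0$ for every $k$ and every $j$. I would handle this by noting that $z[0]=\mathbb{1}>0$ and that $P$ is nonnegative with $p_{jj}>0$ (for the broadcast-style weights stated in the lemma, $p_{jj}=1/(1+\mathcal{D}_j^+)$; more generally, positivity of the self-weight is part of the weight-selection convention), so $z_j[k+1]\geq p_{jj}z_j[k]>0$ by induction. Once positivity of $z_j[k]$ is secured, the continuity of division at a positive denominator legitimately passes the limit through the ratio, completing the argument. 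The main technical ingredient is therefore the Perron–Frobenius limit $P^k\to\pi\mathbb{1}^T$; the rest is a direct computation.
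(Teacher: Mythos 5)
Your proof is correct. Note that the paper itself does not prove this lemma --- it is imported wholesale from the cited reference --- so the relevant comparison is with the proof machinery the paper builds for its generalizations (Theorem~\ref{the:delay} and Lemma~\ref{our_lemma2}). There the authors cannot use your argument: once delays or switching are introduced, the effective update matrix changes from step to step, the product $\overline{P}[k]\cdots\overline{P}[1]$ need not converge at all, and they must fall back on Wolfowitz's theorem for products of SIA matrices, which only guarantees \emph{weak} ergodicity (each long word is approximately $c_B\mathbb{1}^T$ for some word-dependent $c_B$), followed by an explicit sandwich bound on the ratio $\mu_j[k]$. Your fixed-matrix Perron--Frobenius argument is the clean specialization available in the static, delay-free case: it buys an actual limit $P^k\to\pi\mathbb{1}^T$ (strong convergence), so the ratio computation becomes a one-line cancellation of $\pi_j$ rather than an $\epsilon$-bound. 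Your handling of the well-definedness of $\mu_j[k]$ via $p_{jj}>0$ and induction on $z_j[k]\geq p_{jj}z_j[k-1]>0$ is the right supplementary observation (and is consistent with the weight convention $p_{jj}=1/(1+\mathcal{D}_j^+)>0$ stated just before the lemma); one could alternatively note that primitivity alone forces $P^k>0$, hence $z[k]=P^k\mathbb{1}>0$, for all sufficiently large $k$, which suffices for the asymptotic claim. No gaps.
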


Note that the ratio consensus in \cite{2010:christoforos} is actually a simpler version of more general algorithms that have appeared under various names in the literature (e.g., weak ergodicity property in \cite{2006:Seneta} or the push-sum algorithm in \cite{2003:Kempe}).

\subsection{Products of SIA Matrices}

A stochastic matrix $P$ is called in \cite{1963:Wolfowitz} SIA (stochastic, indecomposable, and aperiodic) if the limit $Q = \lim_{k \rightarrow \infty} P^k$ exists and has all of its columns identical. Specifically, $Q={\bf c}_P \mathbb{1}^T$ for some nonnegative vector ${\bf c}_P$. It can be shown that this definition of a SIA matrix is equivalent to the standard definitions of indecomposability and aperiodicity for stochastic matrices.\footnote{A stochastic matrix $P\in \mathbb{R}^{n \times n}$ is said to be decomposable if there exists a nonempty proper subset $\mathcal{S} \subset \{ 1, 2, \ldots, n \}$ such that $p_{ji}=p_{ij}=0$ whenever $v_i\in \mathcal{S}$ and $v_j \notin \mathcal{S}$; also, $P$ is indecomposable if it is not decomposable. A stochastic matrix $P$ is aperiodic if the Markov chain it describes is aperiodic, that is for every state $i$ there exists $k_i$ such that for all $k' \geq k_i$, the probability of being at state $i$ after $k'$ steps is greater than zero (for all $k'$) or zero (for all $k'$). Both indecomposability and aperiodicity are properties that can be checked using the structure of the digraph that is induced by the zero/nonzero structure of matrix $P$. Specifically, indecomposability follows from having a connected digraph with a single strongly connected component; for an indecomposable matrix, aperiodicity is guaranteed as long as at least one component in the strongly connected component has a self loop.} Let $A_1, A_2, \ldots , A_m$ be any square matrices of the same order. By a \emph{word} (in the $A$'s) of length $\ell \in \mathbb{N}$ we mean the product of $\ell$ $A$'s (repetitions permitted). A \emph{stochastic}, \emph{indecomposable}, and \emph{aperiodic} (SIA) matrix is a column stochastic matrix $B$ such that $\lim_{k \rightarrow \infty} B^k$ exists and has all columns the same, i.e., it is a rank one matrix of the form $c_B \mathbb{1}^T$ for some nonnegative column vector $c_B$. For the derivation of our results we make use of the theorem by Wolfowitz \cite{1963:Wolfowitz} below.

\begin{theorem} \cite{1963:Wolfowitz}
Let $\mathcal{\overline{P}} = \{ \overline{P}_1, \overline{P}_2, \ldots, \overline{P}_{m} \}$ be a collection of column stochastic matrices of the same order such that any word in the $\overline{P}$'s is stochastic, indecomposable, and aperiodic (SIA). For any $\epsilon > 0$ there exists an integer $\nu(\epsilon)$ such that any word $B=[b_{ji}]\in \mathbb{R}_{+}^{n\times n}$ (in the $\overline{P}$'s) of length $\ell \geq \nu(\epsilon)$ satisfies $\delta(B) <\epsilon$, where $\delta(B)=\max_{j} \max_{i_1,i_2} | b_{j,i_1} - b_{j,i_2} |$.
\label{Wolfowitz}
\end{theorem}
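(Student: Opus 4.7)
The plan is to combine a Hajnal-type contraction inequality with a finite-semigroup argument on the zero/nonzero patterns of words. First, for a column-stochastic matrix $A$, I would introduce the scrambling coefficient $\lambda(A) := \min_{i_1 \neq i_2} \sum_j \min\{a_{j,i_1}, a_{j,i_2}\}$, so that $\lambda(A) > 0$ iff every pair of columns of $A$ shares a row of common positive entries (the classical notion of a scrambling matrix). By a direct convex-combination computation I would establish the Hajnal-type contraction
$$\delta(AB) \leq \bigl(1 - \lambda(A)\bigr)\, \delta(B)$$
for any pair of compatible column-stochastic matrices $A, B$. Iterated along a long product, this reduces the theorem to exhibiting a uniform lower bound $\lambda_0 > 0$ on $\lambda(W)$ for every word $W$ of some common length $k_0$.

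The main step is therefore to prove that such a $k_0$ exists: every word of length $k_0$ in $\overline{P}_1, \ldots, \overline{P}_m$ is scrambling. The key observations are (i) the support $\sigma(\cdot) \in \{0,1\}^{n \times n}$ of a product of nonnegative matrices depends only on the supports of the factors, so the set $\Sigma$ of supports of all words forms a finite semigroup, and (ii) scrambling is preserved under multiplication by any column-stochastic matrix from either side (given two columns of the product, propagate a shared positive row through the intermediate support using column-stochasticity). Suppose for contradiction that no such $k_0$ exists. Then one can build an infinite sequence $i_1, i_2, \ldots$ whose every prefix product is non-scrambling. By finiteness of $\Sigma$, some non-scrambling support $S$ recurs among the prefix supports at an infinite set of times $k_1 < k_2 < \cdots$, and the loop word $V = \overline{P}_{i_{k_1+1}} \cdots \overline{P}_{i_{k_2}}$ satisfies $S \cdot \sigma(V) = S$, hence $S \cdot \sigma(V)^t = S$ for every $t \geq 1$. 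But $V$ is SIA by hypothesis, so $V^t \to c_V \mathbb{1}^T$; for $t$ large, the support of $V^t$ contains the scrambling support of the rank-one limit, so by (ii) $S = S \cdot \sigma(V^t)$ is itself scrambling, contradicting the choice of $S$.

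With $k_0$ in hand, set $\lambda_0 := \min\{\lambda(W) : W \text{ is a word of length } k_0\}$, a minimum over finitely many strictly positive numbers, hence positive. For any word $B$ of length $\ell \geq q k_0$, block-decompose $B = B_1 B_2 \cdots B_q R$ with each $B_i$ of length $k_0$ and $R$ of length less than $k_0$; iterating the Hajnal contraction $q$ times gives $\delta(B) \leq (1-\lambda_0)^q \delta(R) \leq (1-\lambda_0)^q$, and choosing $q$ so that $(1-\lambda_0)^q < \epsilon$ supplies an explicit $\nu(\epsilon) = k_0 q$. The main obstacle in this plan is the uniformization in the second paragraph: a straightforward use of SIA gives only a scrambling power for each individual word in isolation, whereas what is needed is a single $k_0$ that works simultaneously for every word. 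The finite-semigroup compactness argument on $\Sigma$, together with the two-sided preservation of scrambling, is precisely what supplies this uniformity.
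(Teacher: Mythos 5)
First, a point of reference: the paper does not prove this statement at all --- it is imported verbatim from Wolfowitz (1963) and used as a black box --- so your attempt can only be measured against the classical argument. Your overall strategy (reduce the theorem to the existence of a uniform length $k_0$ at which every word is scrambling, then contract with a Hajnal-type inequality) is a faithful reconstruction of that argument, and the hard part of your proof, the uniformization in the second paragraph, is correct: scrambling is indeed preserved under one-sided multiplication by column-stochastic matrices (pick a column with positive weight from each convex combination and use the shared positive row of those two columns), K\"onig's lemma yields the infinite branch of non-scrambling prefixes, pigeonhole on the finite semigroup of supports produces the recurrent non-scrambling pattern $S$ fixed by the support of the loop word $V$, and the SIA hypothesis forces $V^t$ to acquire an everywhere-positive row for large $t$, which makes $S$ scrambling and gives the contradiction.

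The genuine flaw is the orientation of your contraction inequality. For column-stochastic $A,B$, with the paper's $\delta$ (spread across columns within each row) and your column-wise scrambling coefficient $\lambda$, column $i$ of $AB$ is the convex combination of the columns of $A$ with weights taken from column $i$ of $B$; it is therefore the overlap of the columns of the \emph{right} factor that contracts, and the correct bound is $\delta(AB)\le \bigl(1-\lambda(B)\bigr)\delta(A)$. The inequality you state, $\delta(AB)\le \bigl(1-\lambda(A)\bigr)\delta(B)$, is false. Take $n=4$, let $A$ have columns $e_1,e_1,e_2,e_2$ (so $\lambda(A)=0$), and let $B$ have columns $(0.3,0.3,0.2,0.2)^T$ and $(0.2,0.2,0.3,0.3)^T$ (repeat one to fill out four columns), so $\delta(B)=0.1$; then the corresponding columns of $AB$ are $(0.6,0.4,0,0)^T$ and $(0.4,0.6,0,0)^T$, giving $\delta(AB)=0.2>\bigl(1-\lambda(A)\bigr)\delta(B)=0.1$. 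Consequently your final chain, which peels length-$k_0$ blocks off the \emph{left} and leaves the short remainder $R$ on the right, does not follow from a valid inequality. The repair is mechanical: decompose $B=R\,B_q\cdots B_2B_1$ with the remainder $R$ of length less than $k_0$ placed on the left, and iterate $\delta(XW)\le\bigl(1-\lambda(W)\bigr)\delta(X)$ to strip the scrambling blocks off the right, using $\delta(R)\le 1$. With that correction the proof is sound.
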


%We use $x_{j}[k] \in \mathbb{R}$ to be the value of node $v_j$ at time $t_k$ of information exchange between nodes. A node combines the information received by its neighbors using nonnegative weights $p_{ji}[k]$, that capture the information inflow from agent $v_i$ to agent $v_j$ at time $k$. In this work, we assume that node $v_j$ can choose its self-weight and the weights on its out-going links ${N}^{+}_j $ only. Each node updates its value $x_{j}[k]$ by combining it with the available delayed value of its neighbors  $x_{i}[s]$ $s\in \mathbb{Z}, s\leq k,~v_i\in \mathcal{N}^{-}_j$, according to the following relation:

%\begin{align}\label{eq:1}
%x_{j}[k+1]=\TC{p_{jj}[k]}x_{j}[k] + \sum_{v_i \in \mathcal{N}^{-}_j}\TC{p_{ji}[k-\tau_{ji}[k]]} x_{i}[k-\tau_{ji}[k]] ,
%\end{align}

%After $\ell$ steps, the resulted matrix can be represented by

%\begin{align}
%A_{i_{\ell}}\ldots A_{i_{2}}A_{i_{1}}&=c_{\ell}\mathbb{1}^{T} \\
%A_{i_{\ell+1}}A_{i_{\ell}}\ldots A_{i_{2}}A_{i_{1}}&=A_{i_{\ell+1}}c_{\ell}\mathbb{1}^{T}=c_{\ell+1}\mathbb{1}^{T}
%\end{align}

In words, Theorem~\ref{Wolfowitz} states that for large enough $\ell$, the product of $\ell$ matrices from the collection $\mathcal{\overline{P}}$ has all of its columns approximately the same. Note that the result does not mean that all matrix products converge to a single matrix of the form $c\mathbb{1}^T$; however, for large enough $\ell$, each word $B$ will take the form $c_B\mathbb{1}^T$ for some column vector $c_B$. 

%The following lemma makes direct use of the theorem by Wolfowitz and extends Lemma \ref{lemma_christoforos} to include the case for which there exist delays in the system between the communication links.
%\end{remark}

% ===============================================
%
%

% Modeling Delays and Switching
%
%
% ===============================================
\section{Modeling Delays and Switching}\label{model}

\subsection{Modeling Delays}

We first focus on the average consensus problem in the presence of bounded delays when the communication links among components are fixed and captured by an arbitrary strongly connected digraph. More specifically, the transmission on the link from node $v_i$ to node $v_j$ at time step $k$ undergoes an {\em a priori unknown} delay $\tau_{ji}[k]$, where $\tau_{ji}[k]$ is an integer that satisfies $0 \leq \tau_{ji}[k] \leq \bar{\tau}_{ji} < \infty$ (i.e., delays are bounded). The maximum delay is denoted by $\bar{\tau} = \max_{(v_j, v_i) \in \mathcal{E}} \bar{\tau}_{ji}$. We also assume that $\tau_{jj}[k]=0$, $\forall v_j \in \mathcal{V}$, at all time instances $k$ (i.e., the own value of a node is always available without delay). %Note that this assumption is not critical and is made mostly for convenience.
 
Under this model, the information available to node $v_j$ at time step $k$ (and which can be used to update its value to $x_j[k+1]$) comprises of its own values and all values received by its neighbors by that time, i.e., it is a subset of the values in the set $\{x_{j\leftarrow i}[s] \; | \; 0 \leq s \leq k, v_i \in \mathcal{N}^{-}_j \cup \{ v_j \} \}$ (recall that, in the digraph setting we consider, node $v_i$ selects the weight of the link $(v_j, v_i)$ and thus sends to node $v_j$ the value $x_{j\leftarrow i}[s] \triangleq p_{ji} x_i[s]$). The protocol we will employ relies on having each node $v_j$ update its information state at time step $k$ to $x_{j}[k+1]$ by combining (in a linear fashion) its own value $x_j[k]$ and the possibly delayed information received {\em at that time step} by its in-neighbors. In terms of the notation used above, this information is captured by
$
\{ x_{j\leftarrow i}[s] \; | \; 0 \leq s \leq k, \; s+\tau_{ji}[s] = k, \; v_i \in \mathcal{N}^{-}_j \cup \{ v_j \} \} \; .
$
The exact way in which this information is used will be discussed later.

\subsection{Modeling Switching}

We consider a setting where the set of components in the multi-component system is fixed to $\mathcal{V} = \{ v_1, v_2, \ldots, v_n \}$, but the (possibly directional) communication links between them are allowed to change. A convenient way of capturing this is to assume that we have a sequence of time-varying digraphs of the form $\mathcal{G}[k]=(\mathcal{V}, \mathcal{E}[k])$). This means that at each time instant $k$, each node $v_j$ has possibly different sets of in- and out-neighbors, denoted respectively by $\mathcal{N}_j^-[k]$ and $\mathcal{N}_j^+[k]$. The in-degree and out-degree of node $v_j$ are defined as $\mathcal{D}_j^-[k] = | \mathcal{N}_j^-[k] |$ and $\mathcal{D}_j^+[k] = | \mathcal{N}_j^+[k]|$, respectively.

As in the case when there is no change in the interconnection topology, each node $v_j$ is in charge of setting the weights $p_{lj}[k]$, $v_l \in \mathcal{N}_j^+[k]$, on all links to its out-neighbors. Due to the changing topology, the weight matrix will be time-varying and will be denoted by $P[k]$. What is important is for $P[k]$ to be column stochastic and to have positive weights at all links of the graph including its diagonal elements. As in Lemma~\ref{lemma_christoforos}, nodes can easily set the weights on the links to their out-neighbors to ensure column stochasticity as long as each node $v_j$ has knowledge of its out-degree $\mathcal{N}_j^+[k]$ at each time step (in such case, each node $v_j \in \mathcal{V}$ sets $p_{lj} = \frac{1}{1+\mathcal{D}_j^+[k]}$ for $v_l \in \mathcal{N}_j^+[k] \cup \{ v_j \}$). There are various ways in which the out-degree information can become available at each node (in undirected graphs this information is obviously available but even in digraphs it can become available with simple protocols that we describe in more detail later). Even if the out-degree information becomes available with some delay, the protocols we propose can still be modified to reach consensus to the exact average of the nodes' initial values.

In our analysis of changing interconnection topology, we consider two cases. \\
\noindent {\bf (i) Switching without delays:} When we have a varying interconnection topology and there exist no delays in the communication links, each node $v_j$ updates its information state at time step $k$ to $x_{j}[k+1]$ by combining its own state $x_j[k]$ and the available information received by its neighbors $\{ x_{j\leftarrow i}[k] \; | \; v_i\in \mathcal{N}^{-}_j[k] \}$ (the latter information also includes the positive weights $p_{ji}[k]$ that capture the weight of the information inflow assigned by component $v_i$ to the link $(v_j, v_i)$ at time $k$). Here, we will consider two subcases: (a) each transmitting node knows its out-degree as soon as the change takes place, and (b) each transmitting node knows its out-degree with some delay. \\
\noindent {\bf (ii) Switching with delays:} In this case, each node $v_j$ updates its information state at time step $k$ to $x_j[k+1]$ by combining its own value $x_j[k]$ and the available (possibly delayed) information $\{ x_{j\leftarrow i}[s] \; | \; 0 \leq s \leq k, \; s+\tau_{ji}[s]=k, \; v_i \in \mathcal{N}^{-}_j[s] \}$ (the latter information also includes the positive weights $p_{ji}[s]$, that component $v_i$ assigns to link $(v_j, v_i)$ at time $s$. We consider again the two cases (a) and (b) mentioned earlier where each node $v_j$ discovers its out-degree without or with delay (the out-degree with delay will be made clear in the analysis), and also consider a third case (c) in which a node $v_j$ discovers an established link with some delay.

% ===============================================
%
%
% HANDLING DELAYS
%
%
% ===============================================
\section{Handling Delays in a Digraph}\label{delays}

\noindent We first start with a static digraph, where each link transmission can undergo a bounded delay. We assume that each node $v_j$ chooses its self weight $p_{jj}$ and the weights $\{ p_{lj} \; | \; v_l \in \mathcal{N}_j^+ \}$ on links to its out-neighbors so that these weights are positive and satisfy $\sum_{v_l \in \mathcal{N}_j^+ \cup \{ v_j \}} p_{lj} = 1$ for all $v_j \in \mathcal{V}$ (a simple choice would be to set all of these weights to $\frac{1}{1+\mathcal{D}_j^+}$ as in Lemma~\ref{lemma_christoforos}). We employ a protocol where each node updates its information state according to the following relation:
\begin{align}\label{eq:1}
x_{j}[k+1] &=p_{jj}x_{j}[k] + \sum_{v_i \in \mathcal{N}^{-}_j} \sum_{r=0}^{\bar{\tau} } x_{j\leftarrow i}[k-r]I_{k-r,ji}[r] \; , \; k=0, 1, 2, \ldots
\end{align}
\noindent where $x_{j\leftarrow i}[k-r] \triangleq p_{ji} x_{i}[k-r]$, $x_{j}[0] \in \mathbb{R}$ is the initial state of node $v_j$, and
\begin{align}
I_{k,ji}(\tau) =
\begin{cases} 1, & \text{if $\tau_{ji}[k] =\tau$,}
\\
0, &\text{otherwise.}
\end{cases}
\label{eq:indicatorfunction}
\end{align}
In the absence of delays, we have $\tau_{ji}[k] = 0$ and the update relation \eqref{eq:1} reduces to \eqref{eq:1_1} with constant weights. We will show that if \eqref{eq:1} is employed in place of \eqref{eq:1_1} and is used to run two iterations as in Lemma~\ref{lemma_christoforos}, the resulting ratio consensus approach can still be used to calculate the exact average, despite arbitrary but bounded delays at the communication links. Essentially, we establish that \eqref{eq:1} is a ratio consensus protocol tolerant to arbitrary but bounded delays. 

\begin{assum}
For the analysis below we are given a digraph $\mathcal{G}(\mathcal{V}, \mathcal{E})$ (that represents the information exchange between agents in a multi-agent system). Each node $v_j \in \mathcal{V}$ has an initial value $y_0(j)$ and runs ratio consensus, i.e., two versions of the iteration in \eqref{eq:1}, one with initial value $y_0(j)$ and one with initial value $z_0(j)=1$. We make the following assumptions:
\begin{enumerate}
\item[(A1)] The digraph is strongly connected, and the (nonnegative) weights $p_{lj}$ are positive for $l=j$ and $(v_l, v_j) \in \mathcal{E}$, and satisfy $\sum_{l=1}^n p_{lj} = 1$ for all $v_j \in \mathcal{V}$ (so that they form a column stochastic matrix $P$). For simplicity, we will assume that each node sets the weights on the links to its out-neighbors to $p_{lj} = \frac{1}{1+\mathcal{D}_j^+}$ for $v_l \in \mathcal{N}_j^+ \cup \{ v_j \}$ (zero otherwise). \label{A1}
\item[(A2)] There exists a finite $\bar{\tau}$ that uniformly bounds the delay terms; i.e. $\tau_{ji}[k] \leq \bar{\tau}< \infty$ for all links $(v_j, v_i) \in \mathcal{E}$ for all time instants $k$. In addition, $\tau_{jj}[k] =0$ for all $v_j \in \mathcal{V}$ and all $k$. \label{A2}
\end{enumerate}
\end{assum}
Note that Assumption~(A1) is necessary for the successful operation of any distributed algorithm seeking consensus. The particular choice of weights ensures that the weight matrix $P$ is primitive column stochastic. Assumption~(A2) implies that no message is lost in the network and every agent updates its value using values from its in-neighbors at least once every $\bar{\tau}$ consecutive updates. The proof of the theorem below is developed in the remainder of this section.

\begin{theorem} \label{the:delay}
Consider a strongly connected digraph $\mathcal{G}(\mathcal{V}, \mathcal{E})$. Let $y_j[k]$ and $z_j[k]$ (for all $v_j \in \mathcal{V}$ and $k=0,1,2,\ldots$) be the result of the iterations
\begin{align}
y_{j}[k+1] &=p_{jj}y_{j}[k] + \sum_{v_i \in \mathcal{N}^{-}_j} \sum_{r=0}^{\bar{\tau} } y_{j\leftarrow i}[k-r]I_{k-r,ji}[r] \; , \label{eq:y} \\
z_{j}[k+1] &=p_{jj}z_{j}[k] + \sum_{v_i \in \mathcal{N}^{-}_j} \sum_{r=0}^{\bar{\tau} } z_{j\leftarrow i}[k-r]I_{k-r,ji}[r] \; \label{eq:z},
\end{align}
under Assumptions~(A1) and~(A2). The initial conditions are $y[0]=(y_0(1) \ \ y_0(2) \ \ldots \ y_0(|\mathcal{V}|))^T\equiv y_0$ and  $z[0]=\mathbb{1}$, and $I_{k,ji}$ is an indicator function that captures the bounded delay $\tau_{ji}[k]$ on link $(v_j, v_i)$ at iteration $k$ (as defined in \eqref{eq:indicatorfunction}, $\tau_{ji}[k] \leq \bar{\tau}$). Then, the solution to the average consensus problem can be asymptotically obtained as
$
\displaystyle \lim_{k\rightarrow \infty} \mu_j[k]=\frac{\sum_{v_\ell \in \mathcal{V}} y_0(\ell)}{|\mathcal{V}|} \; , \; \forall v_j \in \mathcal{V} \; ,
$
where $\displaystyle \mu_j[k]=\frac{y_j[k]}{z_j[k]}$.
\end{theorem}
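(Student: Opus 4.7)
The plan is to reduce the delayed iteration to an equivalent \emph{undelayed} linear iteration on an augmented state space, and then invoke Theorem~\ref{Wolfowitz}. For each real node $v_j$ and each possible remaining delay $r \in \{1, \ldots, \bar{\tau}\}$, I would introduce a virtual buffer variable that stores the portion of messages in transit to $v_j$ whose arrival is $r$ steps away. Collecting the $n$ real-node states together with these buffer variables into vectors $\tilde{y}[k], \tilde{z}[k] \in \mathbb{R}^{n(1+\bar{\tau})}$, the iterations \eqref{eq:y}--\eqref{eq:z} can be rewritten compactly as $\tilde{y}[k+1] = \tilde{P}[k]\tilde{y}[k]$ and $\tilde{z}[k+1] = \tilde{P}[k]\tilde{z}[k]$ with a \emph{common} time-varying augmented weight matrix $\tilde{P}[k]$ whose sparsity pattern is determined by the realised delays at step $k$: when node $v_i$ transmits $p_{ji} x_i[k]$, the weight $p_{ji}$ is routed directly into real node $v_j$ if $\tau_{ji}[k]=0$, and otherwise into the buffer slot of $v_j$ corresponding to remaining delay $\tau_{ji}[k]$; buffer slots simply shift their content one step closer to arrival at each update.

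Next I would verify two structural properties. First, every $\tilde{P}[k]$ is column stochastic: the column of a real node $v_j$ places mass $p_{jj}$ on its self-loop and mass $p_{lj}$ on either the real entry $v_l$ or on its buffer entry, summing to $\sum_{v_l \in \mathcal{N}_j^+ \cup \{v_j\}} p_{lj} = 1$, while each buffer column deterministically forwards its full mass to the next slot. Second, because all buffers are initialised to zero, mass is conserved in the augmented system, so $\mathbb{1}^T \tilde{y}[k] = \sum_\ell y_0(\ell)$ and $\mathbb{1}^T \tilde{z}[k] = n$ for all $k \geq 0$.

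The core step is to apply Theorem~\ref{Wolfowitz} to the family $\{\tilde{P}[k]\}$. I would argue that any word of length at least $L$, for some $L$ depending on $\bar{\tau}$ and the diameter of $\mathcal{G}$, is SIA: within $\bar{\tau}+1$ consecutive steps every in-transit message necessarily drains into a real node, and over the next few steps the strong connectivity of $\mathcal{G}$ combined with the strictly positive self-weights $p_{jj}$ ensures that the real-node block of the word becomes positive; the buffers only feed into real nodes, so the augmented chain has a single strongly connected closed class containing a self-loop, yielding the SIA property. Invoking Wolfowitz, the product $\tilde{\Phi}[k,0] \triangleq \tilde{P}[k-1] \cdots \tilde{P}[0]$ has asymptotically identical columns; denoting the common limiting column by $\pi[k]$, and using the conservation identities, one obtains
\begin{align*}
\tilde{y}[k] \;=\; \tilde{\Phi}[k,0]\,\tilde{y}[0] \;\approx\; \pi[k]\,\Big(\sum_\ell y_0(\ell)\Big), \qquad
\tilde{z}[k] \;=\; \tilde{\Phi}[k,0]\,\tilde{z}[0] \;\approx\; \pi[k]\,n.
\end{align*}
Restricting to the real-node block and forming the ratio component-wise gives $\mu_j[k] = y_j[k]/z_j[k] \to (\sum_\ell y_0(\ell))/n$ for every $v_j \in \mathcal{V}$.

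The hardest part will be establishing the SIA property \emph{uniformly} over arbitrary admissible delay sequences, rather than for a single fixed product: because $\tilde{P}[k]$ depends on the adversarial delay realisation, one must verify that every word whose length exceeds the threshold $L$ is SIA, which requires a careful combinatorial argument tracking how buffer contents propagate into real nodes and then diffuse along the strongly connected backbone of $\mathcal{G}$. A secondary subtlety is justifying that the real-node components of $z_j[k]$ are bounded away from zero so that the limit of the ratio equals the ratio of the limits; this should follow from the strict positivity of the real-node entries of $\pi[k]$, which in turn is inherited from the self-loops $p_{jj} > 0$ and the strong connectivity of $\mathcal{G}$ (alternatively, from the initial condition $z_j[0]=1$ together with the positivity propagation afforded by the self-loops).
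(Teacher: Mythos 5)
Your proposal follows essentially the same route as the paper: the same augmented state space with $\bar{\tau}$ virtual buffer nodes per real node, the same common column-stochastic time-varying matrix $\overline{P}[k]$, the same reduction to showing every sufficiently long word in these matrices is SIA (the paper's Proposition~\ref{proposition1}) followed by Wolfowitz's theorem, and the same final step of bounding the denominator away from zero via positivity of the real-node entries of the limiting column (the paper's $c_{\min}$ bound). The two subtleties you flag at the end are exactly the ones the paper's proof addresses, so the plan is sound and matches the paper's argument.
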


Notice that the two iterations in the above theorem are coupled via the delays (the indicator functions $I_{k, ji}$ are the same in both iterations). Our proof is based on an augmented representation (digraph) that allows us to model the distributed system with bounded delays as described in \eqref{eq:1}, which processes packets as soon as they arrive at the destination node. We then use this augmented representation to establish that (for fixed communication topologies) the distributed ratio consensus algorithm in \eqref{eq:y}--\eqref{eq:z} will lead to asymptotic average consensus, regardless of the nature and order of the delays, as long as they are bounded. Note that the nodes are not required to know the delay of any packet or any upper bound of the delay; each node considers all the packets that arrive at that time step, by including their value in the sum.
% When communication interconnections are fixed, a network of multi-agent systems can be easily captured by a graph where nodes represent the agents and links represent the communication links between them. The interaction from node $v_i$ to node $v_j$ is characterized by a positive weight $p_{ji}$ on their link.
%In the literature, many researchers (e.g. \cite{2005:lei_fang}) have considered a link removed at some time $k$, in case there is a communication failure at that time slot/instant.

In the augmented digraph representation, we add extra, ``virtual'' nodes and use them to model the delays. The maximum number of ``virtual" nodes for each original node is bounded by the maximum delay $\bar{\tau}$. In particular, for each node $v_j \in \mathcal{V}$ we introduce $\bar{\tau}$ ``virtual'' nodes $v_j^{(1)},v_j^{(2)}, \ldots, v_j^{(\bar{\tau})}$. At each time step $k$, virtual node $v_j^{(\tau)}$ holds the sum of the values that are destined to arrive to node $v_j$ in $\tau$ steps. The augmented digraph has $(\bar{\tau}+1)|\mathcal{V}|$ nodes and $(1+2\bar{\tau})|\mathcal{E}|$ edges. Before presenting the general case, we illustrate the construction of the augmented digraph via an example.

\begin{exam}
Consider the network of two agents exchanging information as shown in Figure~\ref{example_2nodes}. Note that that the weights $p_{11}$, $p_{12}$, $p_{21}$, and $p_{22}$ are all strictly positive, and satisfy $p_{11}+p_{21}=1$ and $p_{22}+p_{12}=1$; in the simple case presented in the introduction (and mentioned in Theorem~\ref{the:delay}), we have $p_{12} = p_{22} = 1/(1+\mathcal{D}^+_2) = 1/2$ and $p_{21} = p_{11} = 1/(1+\mathcal{D}^+_1) = 1/2$. Suppose the agents experience delays that are bounded by $2$ ($\bar{\tau}=2$). Therefore, two extra ``virtual'' nodes will be added for each node (see Figure~\ref{fig:subfigure}), depicting the states at which the delayed messages reside before reaching their destination (refer to Figure~\ref{fig:subfigure}).
\begin{figure}[h]
\begin{center}
\includegraphics[width=0.45\columnwidth]{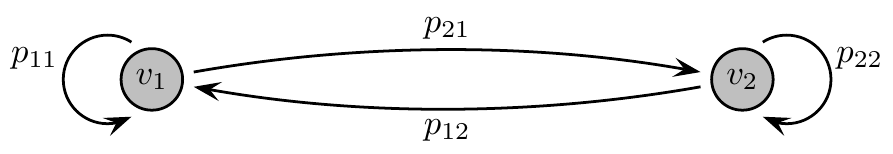}
\caption{A simple example with two nodes when the links do not experience any delays.}
\label{example_2nodes}
\end{center}
\end{figure}
\begin{figure}[h]
%\centering
\subfigure[Graph representation of the network at a time instant $k=k_1$ when there exist no delays. As a result, each node uses the value sent by its neighbor directly, plus the delayed information (sent in previous time instances) that arrives at time instant $k=k_1$.]
{
\includegraphics[width=0.45\columnwidth]{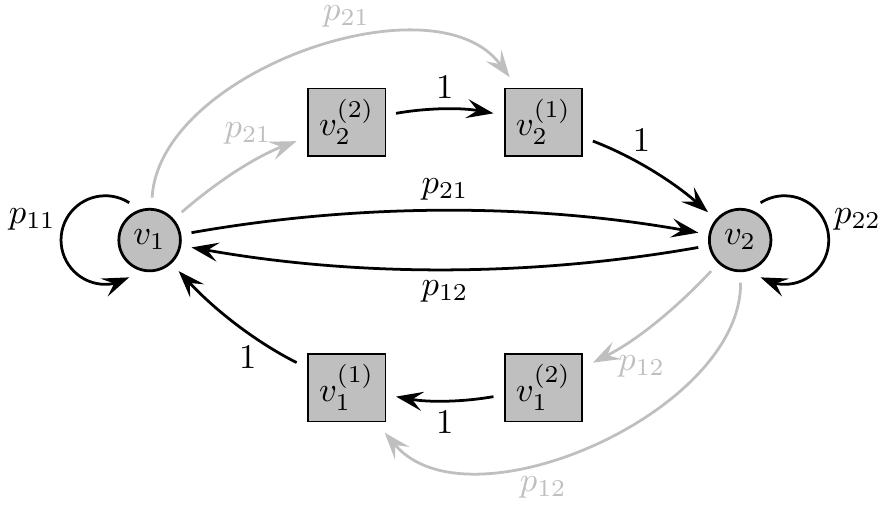}
\label{fig:subfig1}
}
\hfill
\subfigure[Graph representation of the network at a time instant $k=k_2$ for which node $v_2$ sends information to node $v_1$ with delay $\tau_{12} (k_2)=1$, while node $v_1$ sends information to node $v_2$ with delay $\tau_{21} (k_2)=2$.]
{
\includegraphics[width=0.45\columnwidth]{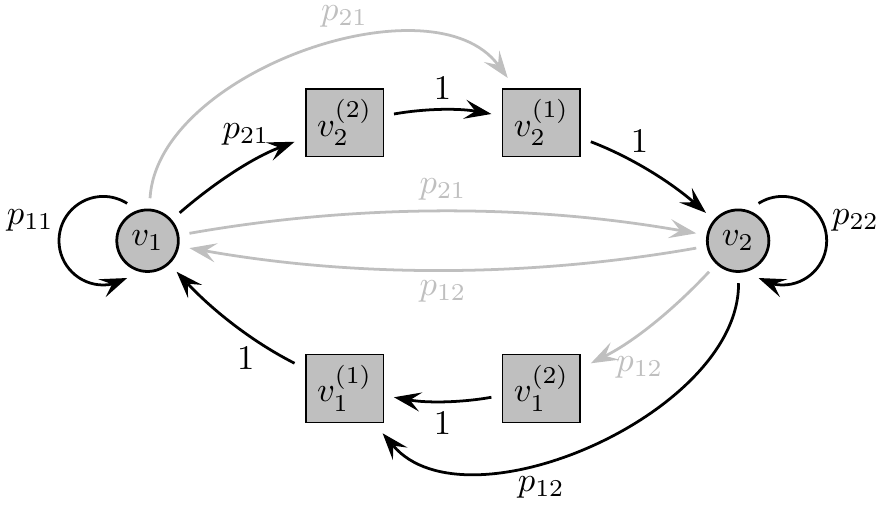}
\label{fig:subfig2}
}
\caption{
A simple example with two nodes to illustrate the modeling of delays using the proposed augmented digraph. The maximum allowable delay ($\bar{\tau}$) is 2. In Figure \subref{fig:subfig1} there exist no delays between communication links at that time instant ($k=k_1$), whereas in Figure \subref{fig:subfig2} both nodes experience delays. Active links are shown by boldface black lines in each case.
}
\label{fig:subfigure}
\end{figure}

Taking $\overline{x}[k] =(x_1[k] \ \  x_2[k]  \ \ x_{1}^{(1)}[k] \  \ x_{2}^{(1)}[k] \  \ x_{1}^{(2)}[k]  \  \ x_{2}^{(2)}[k] )^T$, the iteration in the augmented digraph can be written as
$
\overline{x}[k+1]=\overline{P}[k]\overline{x}[k],
$
where $\overline{P}[k]$ depends on the nature of delays. For example, when there are no delays in the network, say at time instant $k=k_1$, the network is represented by Figure~\ref{fig:subfigure}\subref{fig:subfig1} and the weight matrix is given by $\overline{P}[k_1]$ below. Similarly, if at time instant $k=k_2$ there is a delay of 2 at link $(v_2, v_1)$ and a delay of 1 at link $(v_1, v_2)$ (the network is shown in Figure \ref{fig:subfigure}\subref{fig:subfig2}), the matrix representation is given by $\overline{P}[k_2]$ below.
$${\small
\begin{array}{cc}
\overline{P}[k_1]=
\left( \begin{array}{cccccc}
    p_{11} & p_{12} & 1 & 0 & 0 & 0 \\
    p_{21} & p_{22} & 0 & 1 & 0 & 0 \\
    0 & 0 & 0 & 0 & 1 & 0  \\
    0 & 0 & 0 & 0 & 0 & 1 \\
    0 & 0 & 0 & 0 & 0 & 0 \\
    0 & 0 & 0 & 0 & 0 & 0 \\
  \end{array} \right) \; ,
&
\overline{P}[k_2]=
\left( \begin{array}{cccccc}
    p_{11} & 0 & 1 & 0 & 0 & 0 \\
    0 & p_{22} & 0 & 1 & 0 & 0 \\
    0 &  p_{12} & 0 & 0 & 1 & 0  \\
    0 & 0 & 0 & 0 & 0 & 1 \\
    0  & 0 & 0 & 0 & 0 & 0 \\
    p_{21} & 0 & 0 & 0 & 0 & 0 \\
  \end{array} \right).
\end{array}}
$$
\end{exam}
In the general case, in a network of $n=|\mathcal{V}|$ nodes, we introduce $\bar{\tau} n$ nodes (for a total of $(\bar{\tau}+1)n$ nodes) so that
$
\overline{x}[k+1]=\overline{P}[k]\overline{x}[k] ,
$
where
\begin{align}{\small
\label{Ak}
\overline{P}[k] \triangleq
\left( \begin{array}{ccccc}
    P_0[k] & I_{n\times n} & 0 & \cdots & 0 \\
    P_1[k] & 0 & I _{n\times n}&  \cdots &  0 \\
    \vdots & \vdots & \vdots & \ddots & \vdots \\
    P_{\bar{\tau}-1}[k] & 0 & 0 & \cdots & I_{n\times n} \\
    P_{\bar{\tau}}[k] & 0 & 0 & \cdots & 0 \\
  \end{array} \right),}
\end{align}
with $\overline{x}[k]=\left( x^T[k] \ \ x^{(1)}[k] \ \ldots \ x^{(\bar{\tau})}[k]  \right)^T$ and $x^{(r)}[k]=\left(x_{1}^{(r)}[k]  \ \ldots \  x_{n}^{(r)}[k]   \right)$, $r=1,2, \ldots \bar{\tau}$. Note that $P_0[k], P_1[k], \ldots, P_{\bar{\tau}}[k]$ are appropriately defined nonnegative matrices that depend on the link delays that are experienced by messages sent at time $k$. Specifically, $P_r[k]$ is a matrix associated only with the links of the graph for which the message was delayed by $r$ steps at time step $k$, and satisfies
\begin{align*}
P_{r}[k](j,i) =
\begin{cases} P(j,i) , & \text{if $\tau_{ji}[k] =r$, \ $(j,i)\in\mathcal{E}$,}
\\
0, &\text{otherwise.}
\end{cases}
\end{align*}
Note that, for each $(j,i) \in \mathcal{E}$, only one of $P_0[k](j,i)$, $P_1[k](j,i)$, ..., $P_{\bar{\tau}}[k](j,i)$ is nonzero and is equal to $P(j,i)$. Thus, we also have
\begin{equation}
P=\sum_{r=0}^{\bar{\tau}}P_r[k] \; , \quad k = 0, 1, 2, \ldots \label{eqPmatrix}
\end{equation}
Matrix $\overline{P}[k]$ may take at most $(\bar{\tau}+1)^{|\mathcal{E}|}$ matrix values, where $(\bar{\tau}+1)$ is the total number of states (``virtual'' and original) for each node $v_j$. Specifically, if there exists an edge $(v_j, v_i)$ in the original digraph, then that edge also exists in the augmented digraph along with edges $(v_j^{(1)}, v_i)$, $(v_j^{(2)}, v_i)$, $\ldots$, $(v_j^{(\bar{\tau})}, v_i)$, and also edges $(v_j, v_j^{(1)})$, $(v_j^{(1)}, v_j^{(2)})$, $\ldots$, $(v_j^{(\bar{\tau}-1)}, v_j^{(\bar{\tau})})$. However, among the $(\bar{\tau}+1)$ entries of $\overline{P}[k]$ corresponding to the edge $(v_j, v_i)$ only one of them could be nonzero (and equal to $p_{ji}$); the others will be zero. In the sequel we do not require the matrix $\overline{P}[k]$ to be known at each time step $k$; what we utilize is that $\overline{P}[k]$ will be a matrix from a finite set of possible matrices $\overline{\mathcal{P}}$, which have certain useful properties.

\begin{prop}\label{proposition1}
Let $\overline{\mathcal{P}} = \{\overline{P}_1, \overline{P}_2, \ldots, \overline{P}_{(\bar{\tau}+1)^{|\mathcal{E}|}}\}$ be the set of all possible $\overline{P}[k]$ as defined in \eqref{Ak}. Then, for integer $\ell$, $\ell \geq \bar{\tau}+1$, any $\ell$-length word $B = \overline{P}[k+\ell]\overline{P}[k+\ell-1]\ldots \overline{P}[k+1]$ is SIA. Moreover, for $\ell \geq n(\bar{\tau}+1)$, the first $n$ rows of matrix $B$ will be positive with minimum entry greater or equal to $c_{\min} \equiv \left ( \frac{1}{\mathcal{D}_{\max}^+} \right )^{n(\bar{\tau}+1)}$, where $\mathcal{D}_{\max}^+ = \max_{v_j \in \mathcal{V}} \mathcal{D}_j^+$.
\end{prop}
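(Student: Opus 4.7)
The plan is to exploit two kinds of edges that are present in \emph{every} $\overline{P}[k]$, no matter which delay pattern is active: (1) a self-loop of weight $p_{jj}=1/(1+\mathcal D^+_j)$ at each original node $v_j$---always present because Assumption~(A2) forces $\tau_{jj}[k]=0$ and hence puts $p_{jj}$ into $P_0[k](j,j)$---and (2) the identity-block ``chain'' edges $v_j^{(r+1)}\to v_j^{(r)}$ of weight~$1$, baked into the block structure in \eqref{Ak}. The chain edges deterministically funnel the entire mass of a virtual state $v_j^{(r)}$ down to the original $v_j$ in exactly $r\le\bar\tau$ steps.

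I would tackle the second claim first. Given $\ell\ge n(\bar\tau+1)$, an arbitrary word $B=\overline{P}[\ell]\cdots\overline{P}[1]$, a starting state $a$, and any original target $v_i$, the goal is to exhibit a length-$\ell$ walk of strictly positive weight from $a$ to $v_i$ in the time-varying augmented digraph. If $a=v_l^{(r)}$ is virtual, chain-descend for the first $r$ steps to reach $v_l$; if $a$ is already original, set $v_l:=a$ and $r:=0$. Strong connectivity of $\mathcal G$ then furnishes a simple directed path $v_l=v_{j_0}\to\cdots\to v_{j_t}=v_i$ with $t\le n-1$, and each $\mathcal G$-hop $v_{j_s}\to v_{j_{s+1}}$ is realised in the augmented digraph by one send edge of weight $\ge 1/(1+\mathcal D^+_{\max})$ followed by at most $\bar\tau$ chain edges, so each hop costs at most $\bar\tau+1$ augmented steps. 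The mandatory length is therefore at most $r+t(\bar\tau+1)\le n(\bar\tau+1)-1$; the remaining slack (nonnegative since $\ell\ge n(\bar\tau+1)$) is absorbed by self-loops at the current node. Every non-chain edge on the walk contributes a factor of at least $1/(1+\mathcal D^+_{\max})$, which yields a lower bound of the form $c_{\min}$ claimed in the statement.

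For the SIA claim with $\ell\ge\bar\tau+1$ I would invoke the structural criterion recalled in the footnote for column-stochastic matrices: verify that the digraph induced by $B$ possesses a single closed communicating class that contains a self-loop. Aperiodicity is immediate since $B(v_j,v_j)\ge p_{jj}^\ell>0$ for every original $v_j$. Uniqueness rests on two observations. First, from any state $a$ there is a positive-weight walk of length $\ell$ in the augmented digraph that ends at an original (chain-descend from $a$ to some $v_l$, then stay at $v_l$ via self-loops for the remaining steps), so every closed communicating class must contain at least one original. Second, applying the walk construction of the previous paragraph to $B^n$---which is a product of $n\ell\ge n(\bar\tau+1)$ matrices from $\overline{\mathcal P}$---shows that the first $n$ rows of $B^n$ are strictly positive; in particular $B^n(v_i,v_j)>0$ for every pair of originals, so all originals share a single communicating class of $B$. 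Combining these two observations forces the existence of a unique closed class, and it is aperiodic, so $B$ is SIA.

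The main subtlety I anticipate is the bookkeeping of the walk: one must argue that the chain-descent, hop and self-loop segments can actually be assembled into a genuine length-$\ell$ walk inside the given sequence $\overline{P}[1],\ldots,\overline{P}[\ell]$, with no control over the individual delays $\tau_{ji}[k]$. The resolution is that the self-loop and chain segments are delay-independent, while each $\mathcal G$-edge $v_{j_s}\to v_{j_{s+1}}$ is guaranteed to be realised in the augmented digraph at every time step (with \emph{some} delay in $\{0,1,\ldots,\bar\tau\}$), so the hop can be taken whenever needed at a cost of $1+\tau_{j_{s+1}j_s}[k]\le\bar\tau+1$ augmented steps and any excess budget is consumed silently by self-loops.
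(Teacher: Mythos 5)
Your walk-based argument is structurally sound and is, in substance, the paper's own proof translated from block-matrix algebra into explicit paths in the time-varying augmented digraph: the paper establishes the same two facts (the original nodes form a strongly connected component of the digraph induced by $B$, and every virtual node feeds into an original) by expanding the blocks $B_{0,0}$ and $B_{0,r}$ as sums of nonnegative products whose positive entries are exactly your chain-descent, hop, and self-loop segments. Your reorganization of the SIA part --- unique closed communicating class containing all originals, obtained by applying the positivity claim to $B^n$ and noting that every state reaches an original --- is a legitimate alternative to the paper's direct indecomposability argument, and since it only uses positivity qualitatively, that part is fine.

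The genuine gap is in the quantitative claim. The proposition asserts the \emph{uniform} bound $B(j,i)\ge c_{\min}$ on the first $n$ rows for every $\ell\ge n(\bar{\tau}+1)$, and this uniformity is used downstream: in the proof of Theorem~\ref{the:delay} one takes $\epsilon<2c_{\min}$ precisely to keep the denominators $\bar z_j[k]$ bounded away from zero as $k\to\infty$. Your lower bound is the weight of a single walk, and every self-loop you insert to absorb the slack $\ell-n(\bar{\tau}+1)$ contributes a further factor $p_{jj}=1/(1+\mathcal{D}^+_j)\le 1/2$, so the bound you obtain decays like $\bigl(1/(1+\mathcal{D}^+_{\max})\bigr)^{\ell}$ and tends to zero with $\ell$; it is not a constant independent of $\ell$. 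The missing step is the paper's right-multiplication induction: once the first $n$ rows of a word of length exactly $n(\bar{\tau}+1)$ are entrywise at least $c_{\min}$, appending any further column stochastic factor preserves this, because $(B\overline{P})(j,i)=\sum_m B(j,m)\overline{P}(m,i)\ge c_{\min}\sum_m\overline{P}(m,i)=c_{\min}$. You should add this (one-line) argument. Separately, note that your walk yields $\bigl(1/(1+\mathcal{D}^+_{\max})\bigr)^{n(\bar{\tau}+1)}$ at the base length rather than the stated $\bigl(1/\mathcal{D}^+_{\max}\bigr)^{n(\bar{\tau}+1)}$; the paper's own proof exhibits the same off-by-one in the minimum nonzero entry of $\overline{P}$, so this is a defect of the statement rather than of your reasoning.
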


\begin{proof}[Proof of Proposition \ref{proposition1}] In order to prove that $B=\overline{P}[k+\ell]\ldots \overline{P}[k+2] \overline{P}[k+1]$ is SIA, we have to show that it is (i) column stochastic, (ii) indecomposable, and (iii) aperiodic. \\
\noindent (i) \textit{Column Stochasticity}: This is easy to see as it is equivalent to proving that the product of two or more column stochastic matrices of the same order is also a column stochastic matrix (the result follows easily by induction and is standard).

\noindent (ii) \textit{Indecomposability}: We argue indecomposability for $\ell \geq \bar{\tau}+1$ (the result also holds for any $0 \leq \ell < \bar{\tau}+1$ but we do not discuss the proof here due to space limitations). Write matrix $B$ in block form as 
\begin{align*}{\small
B = \left( \begin{array}{ccccc}
    B_{0,0} & B_{0,1} & B_{0,2} & \cdots & B_{0,\bar{\tau}} \\
    B_{1,0} & B_{1,1} & B_{1,2} & \cdots & B_{1,\bar{\tau}} \\
%    P_1[k] & 0 & I _{n\times n}&  \cdots &  0 \\
    \vdots & \vdots & \vdots & \ddots & \vdots \\
    B_{\bar{\tau}-1,0} & B_{\bar{\tau}-1,1} & B_{\bar{\tau}-1,2} & \cdots & B_{\bar{\tau}-1,\bar{\tau}} \\
    B_{\bar{\tau},0} & B_{\bar{\tau},1} & B_{\bar{\tau},2} & \cdots & B_{\bar{\tau},\bar{\tau}} \\
  \end{array} \right) ,}
\end{align*}
where all blocks are nonnegative matrices of size $n \times n$. We will argue that (i) the zero/nonzero structure of $B_{0,0}$ corresponds to a graph that is strongly connected, and (ii) each of $B_{0,0}$, $B_{0,1}$, $B_{0,2}$, ..., $B_{0,\bar{\tau}}$ has strictly positive entries on its diagonal. These two facts establish that the graph that corresponds to the zero/nonzero structure of the overall matrix $B$ has the following property: (i) any pair of non-virtual nodes (i.e., the top $n$ nodes) can be connected via a directed path (that can actually involve only non-virtual nodes); (ii) all other (virtual) nodes have an outgoing link to at least one of the non-virtual nodes. Therefore, the set of non-virtual nodes is part of a strongly connected component; this component could potentially involve other (virtual) nodes in the graph, but no other strongly connected component exists. Thus, matrix $B$ is indecomposable.

For fact (i), we need to explain why $B_{0,0}$ corresponds to a graph of $n$ nodes that is strongly connected. It is not hard to see that one can write
$$
B_{0,0} = (\Pi_{l=2}^{\ell} P_0[k+l]) P_0[k+1] + (\Pi_{l=3}^{\ell} P_0[k+l]) P_1[k+1] + ... + (\Pi_{l=\bar{\tau}+2}^{\ell} P_0[k+l]) P_{\bar{\tau}}[k+1] + E_{0,0}
$$
where $\Pi_{l=l_1}^{l_2} A[l] \equiv A[l_2] A[l_2-1] ... A[l_1]$ ($\Pi_{l=l_1}^{l_2} A[l] \equiv I$ for $l_2 = l_1-1$ and zero otherwise) and $E_{0,0}$ is a nonnegative matrix (that can be expressed as the sum of various products of the nonnegative\footnote{The fact that the blocks are nonnegative is important because it means that nonzero entries created by some products cannot be cancelled by nonzero entries of other products.} blocks composing the $\overline{P}$ matrices). Since the diagonal elements in matrix $P_0[k+l]$ (for $l=1,2, ..., \ell$ are strictly positive, we know that the diagonals of each product $\Pi_{l=m}^{\ell} P_0[k+l]$, $m = 2, 3, ..., \bar{\tau}+2$, will be strictly positive and thus the elements of each term $(\Pi_{l=m}^{\ell} P_0[k+l]) P_{m-2}[k+1]$ will be positive at the locations where $P_{m-2}[k]$ is positive. Thus, from the expression for $B_{0,0}$ above, the zero/nonzero structure of $B_{0,0}$ corresponds to a graph of $n$ nodes that includes all the edges in $\sum_{r=0}^{\bar{\tau}} P_r[k+1] = P$ (recall \eqref{eqPmatrix}); thus, all edges in the original graph are included and, since the original graph is strongly connected, $B_{0,0}$ corresponds to a graph that is strongly connected.

For fact (ii), we need to explain why each $B_{0,r}$, $r=0, 1, ..., \bar{\tau}$, has strictly positive diagonal entries. For $r=0$, this follows for the discussion above. For $r=1,2, ..., \bar{\tau}$, we can also write
$$
B_{0,r} = (\Pi_{l=r+2}^{\ell} P_0[k+l]) P_0[k+1+r] + E_{0,r} \; , 
$$
where $E_{0,r}$ is again a nonnegative matrix that can be expressed as the sum of various products of nonnegative matrices. Since the diagonal elements in matrix $P_0[k+l]$ (for $l=1,2, ..., \ell$) are strictly positive, we know that the diagonals of each $B_{0,r}$ will be strictly positive.

\noindent (iii) \textit{Aperiodicity}: Since the graph corresponding to $B$ is indecomposable, aperiodicity is easily established due to the fact that the diagonal entries that correspond to the original (non-virtual) nodes in the strongly connected component are nonzero (it is sufficient for at least one of them to be nonzero).

To prove the second part of the proposition (i.e., for $\ell \geq n(\bar{\tau}+1)$, the first $n$ rows of matrix $B$ will be positive with minimum entry greater or equal to $c_{\min} \equiv \left ( \frac{1}{\mathcal{D}_{\max}^+} \right )^{n(\bar{\tau}+1)}$), notice that for $\ell = n(\bar{\tau}+1)$ we can write $B$ as 
$$
B = \underbrace{B_{i_n} B_{i_{n-1}} ...B_{i_2}}_{B'} B_{i_1} \; ,
$$
where each $B_{i_m}$ is the product of $\bar{\tau}+1$ consecutive $\overline{P}$, i.e.,
$$
B_{i_m} = \overline{P}[k+m(\bar{\tau}+1)] \ldots \overline{P}[k+(m-1)(\bar{\tau}+1)+2] \overline{P}[k+(m-1)(\bar{\tau}+1)+1] \; .
$$
From the discussion on indecomposability, we know that each $B_{i_m}$ has blocks $B^{(i_m)}_{0,0}$, $B^{(i_m)}_{0,1}$, ..., $B^{(i_m)}_{0,\bar{\tau}}$ such that the zero/nonzero structure of $B^{(i_m)}_{0,0}$ corresponds to a graph that includes the original strongly connected graph of size $n$ and has a positive diagonal. Thus, the product of $n-1$ such blocks will result in a strictly positive diagonal block for matrix $B'$. An additional multiplication by $B_{i_1}$ on the right, will ensure that each of the top $\bar{\tau}+1$ blocks of matrix $B$ will be strictly positive. Since $B$ involves the product of $n(\bar{\tau}+1)$ nonnegative matrices $\overline{P}$ (whose minimum nonzero entry is $\frac{1}{\mathcal{D}^+_{\max}}$), the minimum entry in $B$ will be greater or equal to $c_{\min}$.

For $\ell = n(\bar{\tau}+1)+1$, we have a matrix product of the form $B \overline{P}[k+1]$, where $B$ is the product of $n(\bar{\tau}+1)$ matrices $\overline{P}$ (thus, its top $n$ rows are strictly positive with minimum entry $c_{\min}$). Since $\overline{P}[k+1]$ is a column stochastic matrix, we can easily conclude that matrix $B \overline{P}[k+1]$ will also have its top $n$ rows positive with minimum entry $c_{\min}$. The claim in the second part of the proposition (that, for $\ell \geq n(\bar{\tau}+1)$, the first $n$ rows of matrix $B$ will be positive with minimum entry greater or equal to $c_{\min}$), then follows easily by induction.
\end{proof}

\begin{proof}[Proof of Theorem~\ref{the:delay}]
% \noindent {\bf Proof of Theorem~\ref{the:delay}}
If we use the augmented graph representation with initial conditions $\bar{y}[0] = [y^T_{0} \; 0 \; 0 \; \ldots \; 0]^T$ and $\bar{z}[0] = [\mathbb{1}^T \; 0 \; 0 \; \ldots \; 0]^T$, we can write
\begin{align*}
\bar{y}[k] = \underbrace{\overline{P}[k] \ldots \overline{P}[2]\overline{P}[1]}_{B_k} \bar{y}[0] \; , \\
\bar{z}[k] = \underbrace{\overline{P}[k] \ldots \overline{P}[2]\overline{P}[1]}_{B_k} \bar{z}[0] \; .
\end{align*}
By Proposition~1 and Wolfowitz's Theorem, we know that for any $\epsilon >0$, the resulting word ${B_k}$ satisfies (for $k\geq\nu(\epsilon)$) ${B_k} = c_{B_k}\mathbb{1}^{T} + E_{k}$, where $c_{B_k}$ is an appropriate nonnegative vector, and $E_{k}$ is an error matrix with entries with absolute value smaller than $\epsilon/2$ (i.e., $|E_k(j,i)| < \epsilon/2$ for all $j,i$). Taking $k$ to also satisfy $k \geq n(\bar{\tau}+1)$, it follows from Proposition~1 that each of the first $n$ entries of $c_{B_k}$ (i.e., the entries that correspond to non-virtual nodes) will be greater than $c_{\min}$. Without loss of generality, we take $\epsilon < 2c_{\min}$ in the remainder of this discussion.

With the above notation at hand, we have for $v_j \in \mathcal{V}$
\begin{eqnarray*}
\mu_j[k] \triangleq \frac{\bar{y}_j[k]}{\bar{z}_j[k]} = \frac{B_k(j,:) \bar{y}[0]}{B_k(j,:) \bar{z}[0]} 
 = \frac{c_{B_k}(j) ({\mathbb 1}^T + e^T_k) \bar{y}[0]}{c_{B_k}(j) ({\mathbb 1}^T + e^T_k) \bar{z}[0]}
 = \frac{({\mathbb 1}^T + e^T_k) \bar{y}[0]}{({\mathbb 1}^T + e^T_k) \bar{z}[0]} \; ,
\end{eqnarray*}
where $c_{B_{k}}(j)$ is the $j$th element of vector $c_{B_{k}}$, and $e^T_k = E_k(j,:)$ is the $j$th row of matrix $E_k$ and satisfies $e_{\max}(k) \equiv \max_i \{ |e_k(i)| \} < \epsilon/2$.

Since $\bar{z}[0] = \mathbb{1} \geq 0$ (elementwise), the denominator of the above expression can be bounded 
$$
n (1 - e_{\max}(k)) \leq ({\mathbb 1}^T + e^T_k) \bar{z}[0] \leq n (1 + e_{\max}(k)) \; .
$$
Similarly, assuming that $\sum_l \bar{y}_l[0] = \sum_l y_l[0] > 0$ (when $\sum_l \bar{y}_l[0] = \sum_l y_l[0] < 0$ or $\sum_l \bar{y}_l[0] = \sum_l y_l[0] = 0$ we can apply a similar analysis), we can bound the numerator of the above expression as 
$$
\Sigma_y - e_{\max}(k) \Sigma_{|y|} \leq ({\mathbb 1}^T + e^T_k) \bar{y}[0] \leq \Sigma_y + e_{\max}(k) \Sigma_{|y|} \; ,
$$
where $\Sigma_y = \sum_l \bar{y}_l[0] = \sum_l y_l[0]$ and $\Sigma_{|y|} = \sum_l | \bar{y}_l[0]|
= \sum_l | y_l[0]|$. Putting the above inequalities together, we obtain
$$
\frac{\Sigma_y - e_{\max}(k) \Sigma_{|y|}}{n (1 + e_{\max}(k))} \leq \mu_j[k] \triangleq \frac{\bar{y}_j[k]}{\bar{z}_j[k]} \leq \frac{\Sigma_y + e_{\max}(k) \Sigma_{|y|}}{n (1 - e_{\max}(k))} \; ,
$$
which can be relaxed (after some algebraic manipulations) to
$
\mu^* - M_k \leq \mu_j[k] \leq \mu^* + M_k
$, 
where $\mu^* = \frac{\sum_l y_l[0]}{n}$ is the exact average and $M_k = \mu^* \frac{(\Sigma_y+\Sigma_{|y|}) e_{\max}(k)}{\Sigma_y(1 - e_{\max}(k))}$. By Wolfowitz theorem, we can take $k$ as large as necessary to make $M_k$ arbitrarily small (by ensuring that $\epsilon$ and thus $e_{\max}(k)$) is as small as desired).
\end{proof}

\begin{exam}
Consider the directed network on the left of Figure \ref{example_directedgraph} where each node $v_j$ chooses its self-weight and the weights of its outgoing links to be $(1+\mathcal{D}_j^{+})^{-1}$ so that the weight matrix $P$ is primitive column stochastic as shown on the right of the figure. Each node $v_j$ updates its information state $x_j[k]$ using equation \eqref{eq:4}, so that the information state for the whole network is given by $x[k+1]=Px[k]$. We first use the update formula \eqref{eq:1_1} with
$
y[0]=(-1 \ \  2 \ \ 3 \ \ 4 \ \ 2)^T \equiv y_0
$
and no delays ($\bar{\tau}=0$). Since the update matrix is column stochastic, iteration \eqref{eq:4} for this network converges, but not necessarily to the average (as shown in Figure~\ref{directed_1} the nodes do not even reach consensus). As suggested in \cite{2010:christoforos}, by simultaneously running two iterations $y[k]$ and $z[k]$ (using the weights in matrix $P$) with initial conditions $y[0]=y_0$ and $z[0]=\mathbb{1}$, respectively, then average consensus is asymptotically reached for the ratio $y_j[k]/z_j[k]$ (see Figure \ref{directed_2}).

\begin{figure}[h]
\hspace*{1cm}\includegraphics[width=0.40\columnwidth]{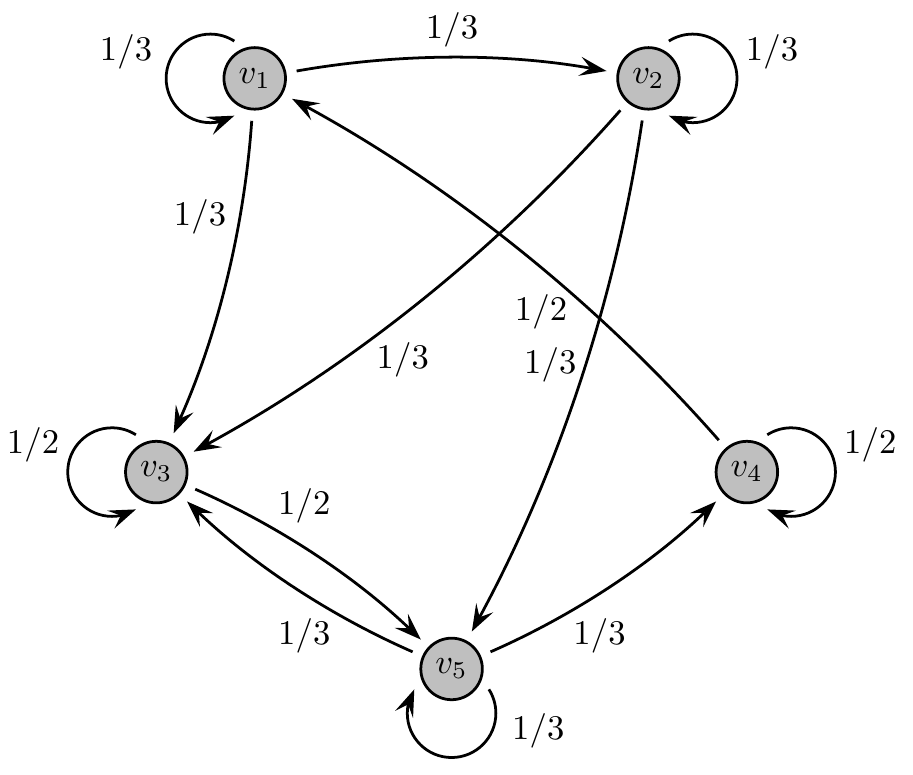}

\vspace*{-4cm}

\hspace*{9cm}~$P=
\left( \begin{array}{ccccc}
    1/3 & 0 & 0 & 1/2 & 0 \\
    1/3  & 1/3  & 0 & 0 & 0 \\
    1/3  & 1/3  & 1/2 & 0 & 1/3 \\
    0 & 0  & 0 & 1/2 & 1/3 \\
    0 & 1/3 & 1/2 & 0 & 1/3 \\
  \end{array} \right)$

\vspace*{1.5cm}

\caption{A simple digraph of five nodes when the links do not experience any delays.}
\label{example_directedgraph}
\end{figure}

\begin{figure}[h]
%\centering
\subfigure[The iteration of \eqref{eq:1_1} for the network in Figure \ref{example_directedgraph} does not lead to average consensus (not even consensus) for the digraph, since $P$ is not a doubly stochastic matrix.]
{
\includegraphics[width=0.46\columnwidth]{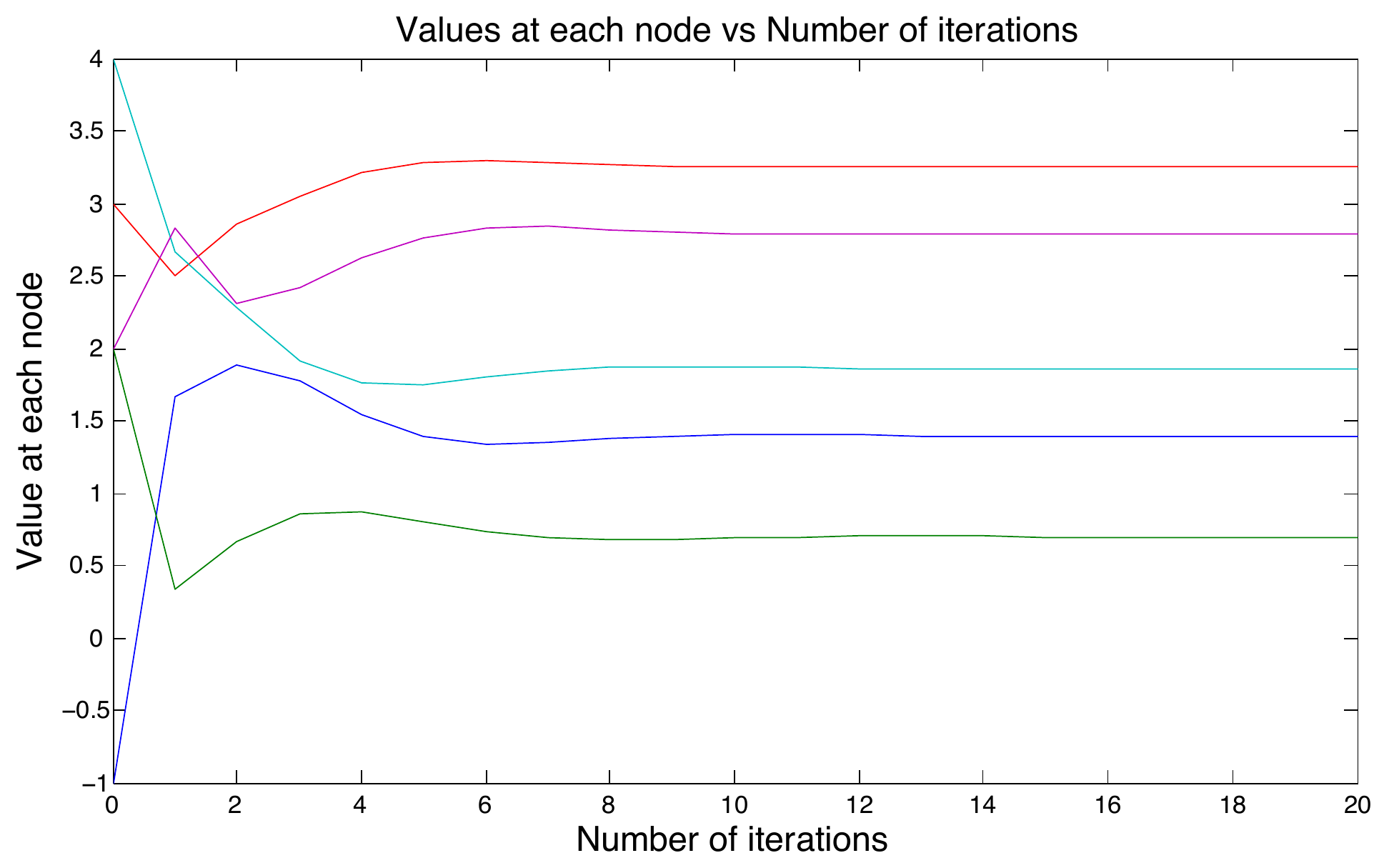}
\label{directed_1}
}
\hfill
\subfigure[By running ratio consensus in \eqref{eq:y}--\eqref{eq:z} with appropriate initial conditions, average consensus is reached (no delays are introduced yet, i.e., $\bar{\tau}=0$).]
{
\includegraphics[width=0.46\columnwidth]{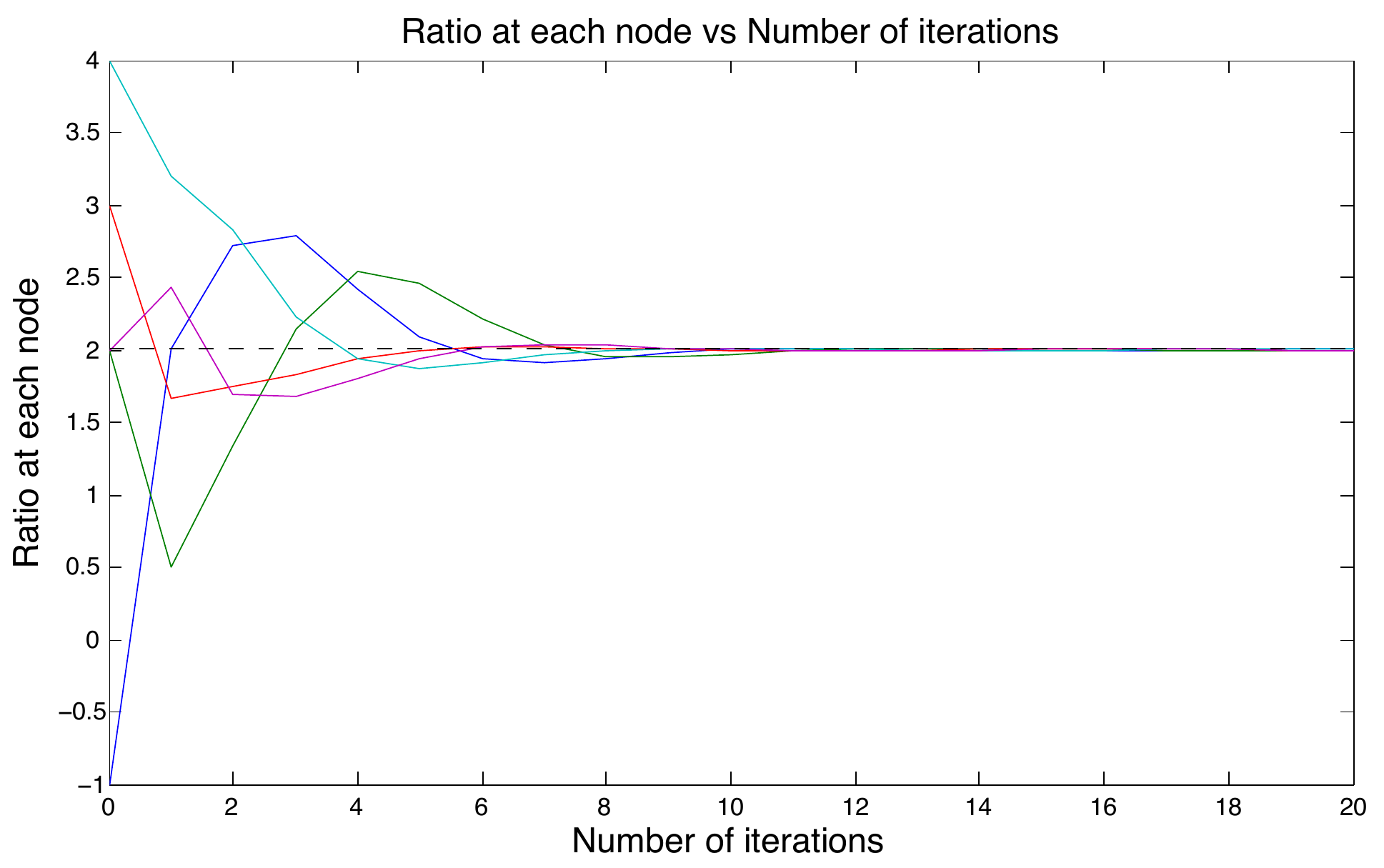}
\label{directed_2}
}
\caption{Iteration \eqref{eq:4} converges but does not reach consensus (left). By simultaneously running two iterations $y[k]$ and $z[k]$ (using the weights in matrix $P$) with initial conditions $y[0]=y_0$ and $z[0]=\mathbb{1}$, respectively, then average consensus is asymptotically reached for the ratio $y_j[k]/z_j[k]$ (right).}
\label{fig:illustration}
\end{figure}

We now consider delays by taking the maximum allowable delay to be $\bar{\tau} = 5$. At each link at each time instant, the delay is an integer in $\{ 0, 1, 2, ..., 5 \}$ (in our simulations each possible delay is chosen with probability $1/6$). If we run our update formula (as in \eqref{eq:1}) for the network in Figure \ref{example_directedgraph} with weights $P$ and
%\begin{align*}
$x[0]=y_0$,
%y[0]=(\mathbb{1}^T \ \ 0 \ \ \ldots 0)^T ,
%\end{align*}
the algorithm does not converge (see Figure \ref{directed_4}). However, if we run ratio consensus in \eqref{eq:y}--\eqref{eq:z} with initial conditions $y[0]=y_0$ and $z[0]=\mathbb{1}$ respectively, then average consensus is asymptotically reached for the ratio $y_j[k]/z_j[k]$ (Figure \ref{fig:directed_3}). This demonstrates the validity of our theoretical analysis, both in the sense that each of the individual iterations does not convergence and also in the sense that the ratios converge to the average of the initial values.

\begin{figure}[h]
%\centering
\subfigure[The update rule \eqref{eq:1} does not converge for the digraph, due to the presence of delays.]
{
\includegraphics[width=0.46\columnwidth]{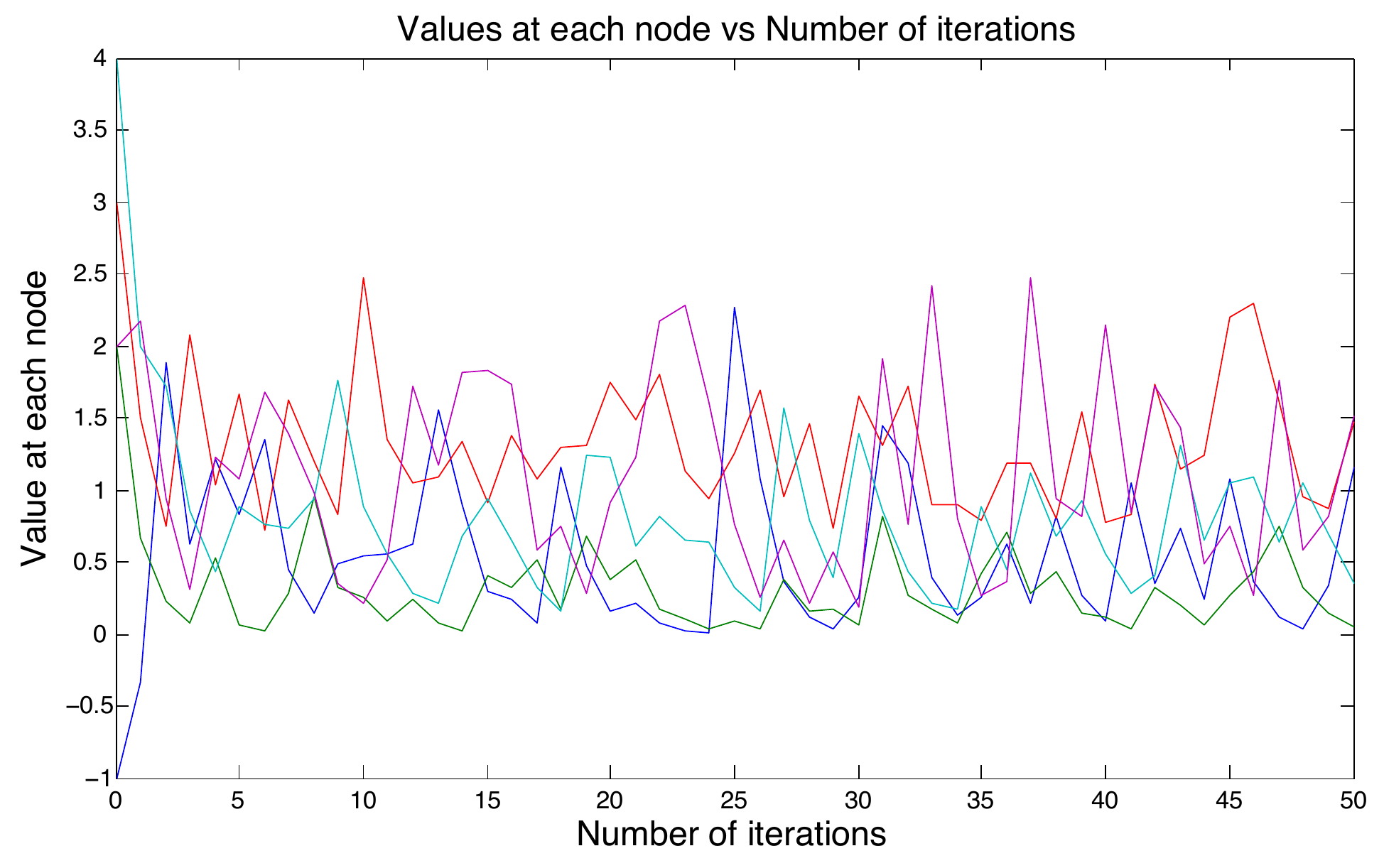}
\label{directed_4}
}
\hfill
\subfigure[By running ratio consensus in \eqref{eq:y}--\eqref{eq:z} with appropriate initial conditions, we can reach average consensus even in the presence of delays.]
{
\includegraphics[width=0.46\columnwidth]{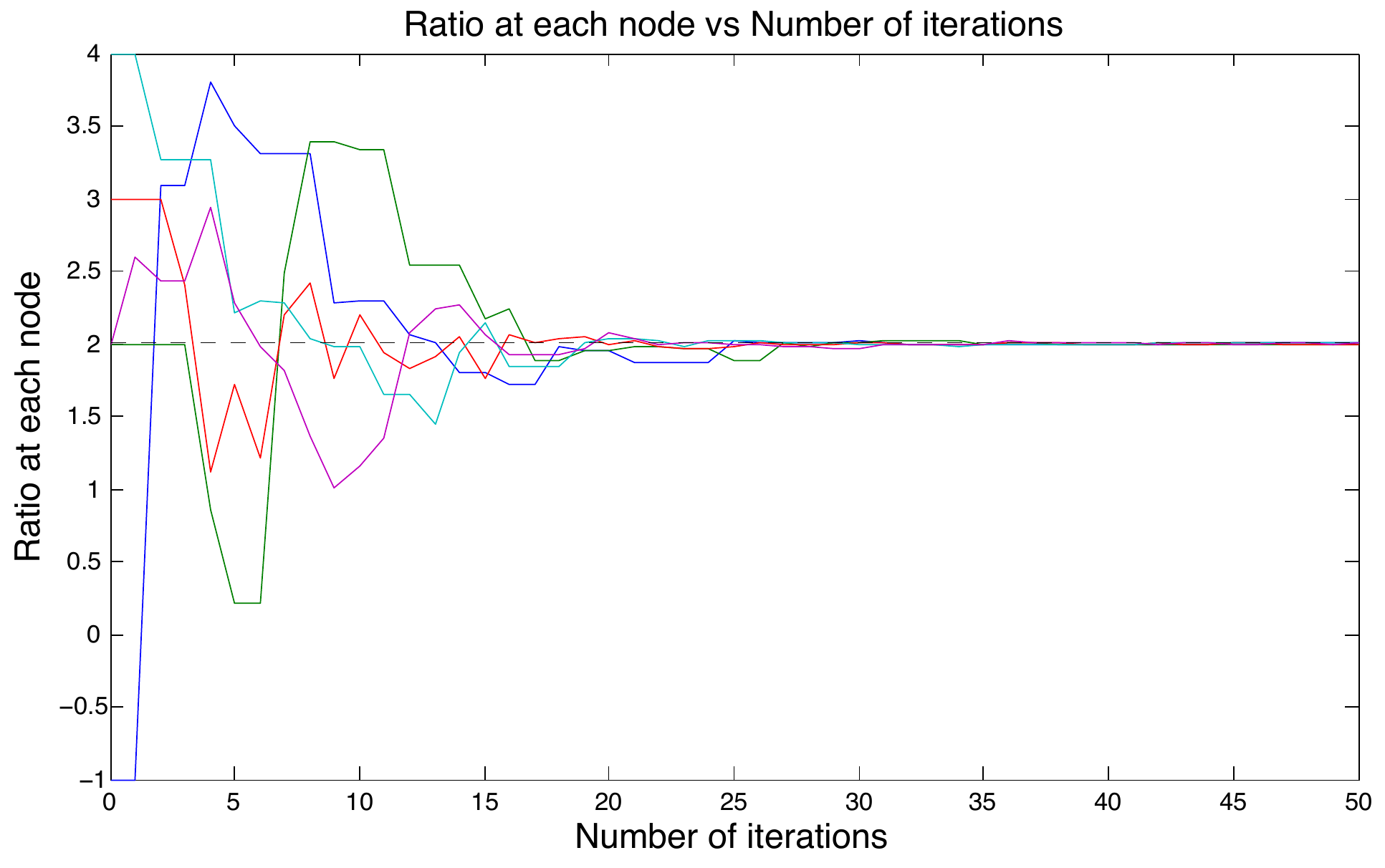}
\label{fig:directed_3}
}
\caption{The update formula in \eqref{eq:1} for the network on the left of Figure~\ref{example_directedgraph} with weights $P$ and $x[0]=y_0$, does not converge (left); however, if we run ratio consensus in \eqref{eq:y}--\eqref{eq:z} with initial conditions $y[0]=y_0$ and $z[0]=\mathbb{1}$ respectively, then average consensus is asymptotically reached for the ratio $y_j[k]/z_j[k]$ (right).
}
\label{fig:delayed_compare}
\end{figure}

\begin{comment}
\begin{figure}[h]
\begin{center}
\includegraphics[width=0.6\columnwidth]{directed_3}
\caption{The update rule \eqref{eq:1} does not converge for the digraph, due to the presence of delays.}
\label{directed_4}
\end{center}
\end{figure}
However, if we run ratio consensus in \eqref{eq:y}--\eqref{eq:z} with initial conditions $y[0]=y_0$ and $z[0]=\mathbb{1}$ respectively, then average consensus is asymptotically reached for the ratio $y_j[k]/z_j[k]$ (Figure \ref{fig:directed_3}). This demonstrates the validity of our theoretical analysis, both in the sense that each of the individual iterations does not convergence and also in the sense that the ratios converge to the average of the initial values.

\begin{figure}[h]
\begin{center}
\includegraphics[width=0.6\columnwidth]{directed_4}
\caption{By running ratio consensus in \eqref{eq:y}--\eqref{eq:z} with appropriate initial conditions, we can reach average consensus even in the presence of delays.}
\label{fig:directed_3}
\end{center}
\end{figure}
\end{comment}

It is obvious from the simulations that the convergence speed of the algorithm depends on the delays (e.g., longer delays should result in slower convergence of the algorithm). Our discussion did not characterize the worst-case combination of delays but, nevertheless, the final average value is not affected by the particular realization of delays.

\end{exam}

To gain additional insight into the problem, we also consider the convergence of node $1$ under (i) different upper bounds in delays (see left of Figure~\ref{fig:delayed_comparenode1} where delays are equally likely as before), and (ii) varying network size (see right of Figure~\ref{fig:delayed_comparenode1} where random geometric graphs of different sizes are used and $\bar{\tau}=5$ with delays being equally likely as before). It is obvious from the simulations that the convergence speed of the algorithm depends on the delays (e.g., longer delays result in slower convergence). Nevertheless, for fixed $\bar{\tau}$ it appears that the size of the network has no effect on the convergence time (at least for geometric graphs).
\begin{figure}[h]
%\centering
%\subfigure[By running ratio consensus in \eqref{eq:y} and \eqref{eq:z} with appropriate initial conditions, we can reach average consensus in the presence of delays. The rate of convergence of the algorithm depends on the magnitude and profile of the delays.]
{
\includegraphics[width=0.51\columnwidth]{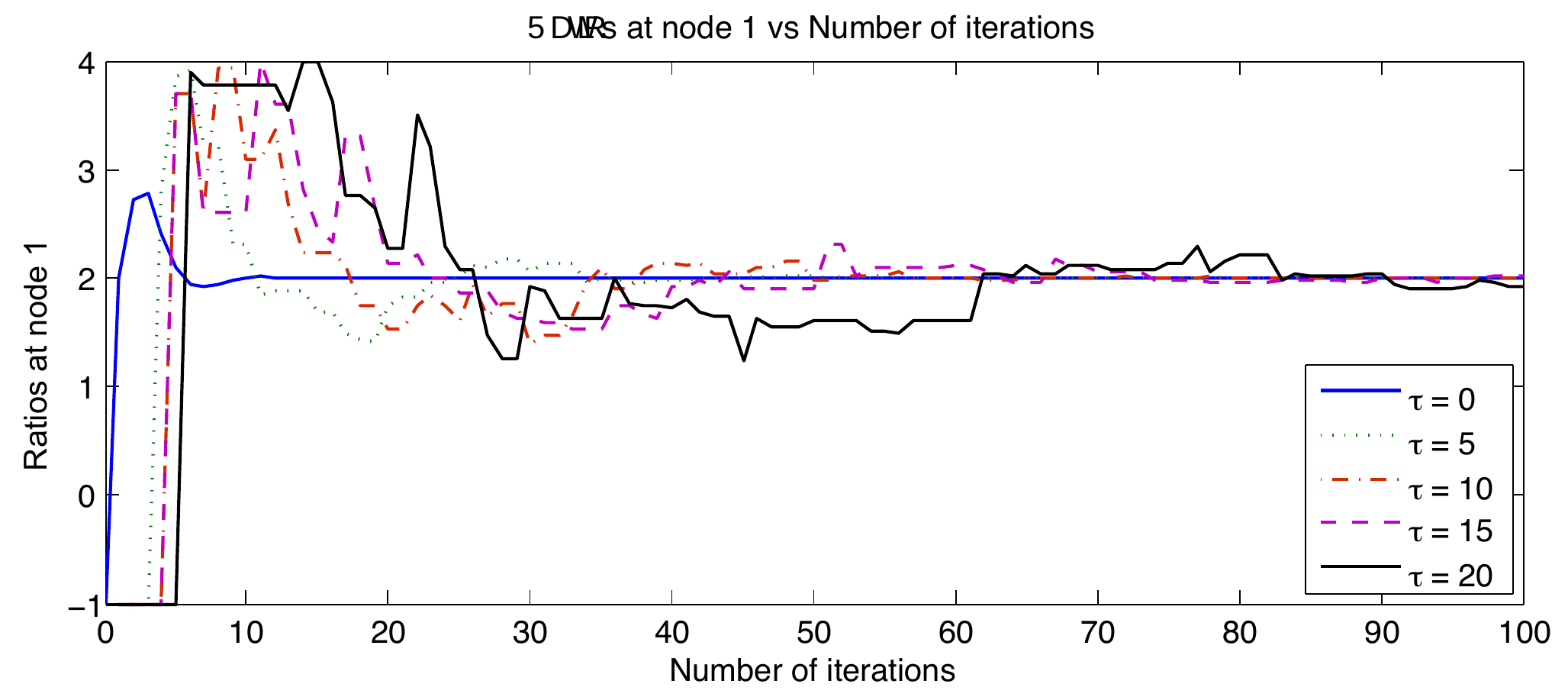}
\label{fig:comparison-delay}
}
\hfill
%\subfigure[By running ratio consensus in \eqref{eq:y} and \eqref{eq:z} with appropriate initial conditions, we can reach average consensus in the presence of delays (here $\bar{\tau}=5$). The rate of convergence of the algorithm is not affected by the size of the network.]
{
\includegraphics[width=0.44\columnwidth]{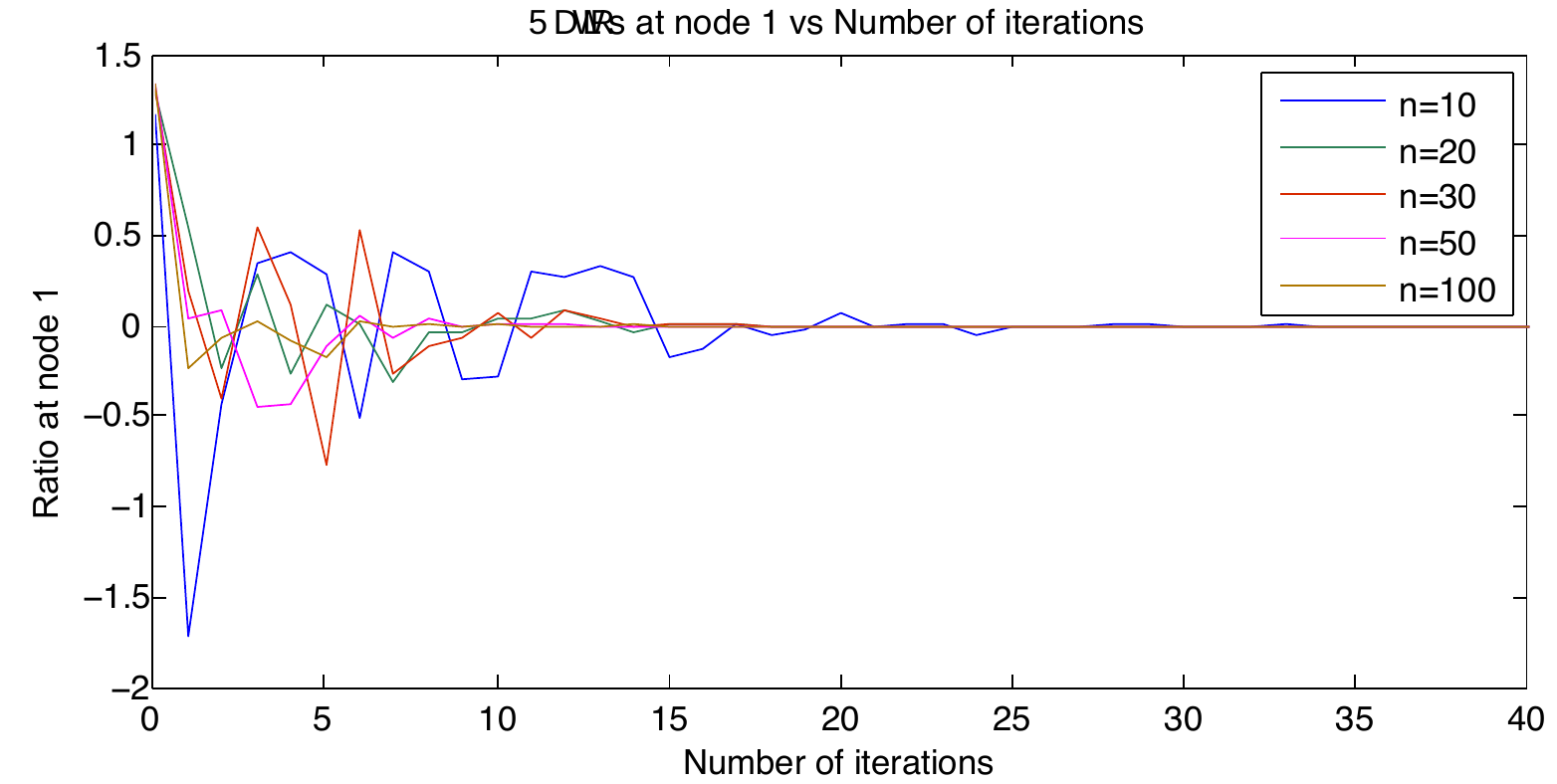}
\label{fig:comparison-size}
}
\caption{Convergence of the ratio at node $1$ for different upper bound $\bar{\tau}$ on delays (left) and different network sizes (right). 
}
\label{fig:delayed_comparenode1}
\end{figure}

\begin{remark}
In \cite{1986:Mitra, 2005:lei_fang}, the following update formula is suggested
\begin{align}
x_j[k+1]=p_{jj}' x_{j}[k] + \sum_{v_i \in \mathcal{N}^{-}_j}p_{ji}' x_{i}[k-d_{ji}[k]] \; , \quad k=0, 1, 2, \ldots
\label{eq:dsdelay}
\end{align}
where $x[0]=y_0$, the weights $p'_{ji}$ form a doubly stochastic weight matrix $P'=[p'_{ji}]$ and $d_{ji}[k]$ is chosen so that node $v_j$ uses in its update the most recently seen value from node $v_i$ (i.e., $d_{ji}[k] = \min_{\tau_{ji}[k-t] = t, 0 \leq t \leq \bar{\tau}} \{ t \}$). Since the weight matrix $P'=[p_{ji}']\in \mathbb{R}_+^{n\times n}$ is doubly stochastic, we know that {\em in the absence of delays} the iteration in \eqref{eq:dsdelay} would reach asymptotic average consensus. The iteration also reaches consensus in the presence of delays (regardless of the delays introduced, as long as they are bounded \cite{1986:Mitra}), but not necessarily to the exact average of the initial values. The value the nodes converge to depends on the specific delays that are introduced during the execution of the iteration.
\end{remark}

%Summarizing, provided the graph characterizing the network is strongly connected, by simultaneously running two iterations $y[k]$ and $z[k]$ as in \eqref{eq:1} (with initial condition $y[0]=x_0$ and $z[0]=\mathbb{1}$, respectively), then average consensus is asymptotically reached for the ratio $y_j[k]/z_j[k]$, even in the presence of delays in the communication between nodes, as long as these delays are bounded.

% ===============================================
%
%
% HANDLING SWITCHING
%
%
% ===============================================
\section{Handling Changing Interconnections}\label{switching}

In this section, we extend the previous setting to include time-varying communication links (in addition to bounded delays on each link). We assume that we have a time-varying digraph, in which the set of nodes is fixed but the communication links can change, i.e., at time step $k$ the interconnections between components in the multi-component system are captured by a digraph $\mathcal{G}[k]=(\mathcal{V}, \mathcal{E}[k])$. For the analysis below, we let $\mathcal{\bar{G}}=\{\mathcal{G}_1, \mathcal{G}_2, \ldots, \mathcal{G}_\nu\}$, $\nu \leq 2^{n^2-n}$, be the set of all possible digraphs\footnote{Each of the $n$ nodes may be connected (out-going link) with up to $(n-1)$ other nodes. As a result, we have $n(n-1)$ possible links, each of which can be either present or not. Hence, we have $2^{n(n-1)}$ possible graph combinations. Of course, depending on the underlying application, some of these interconnection topologies may be unrealizable.} defined for a given set of nodes $\mathcal{V}$. We start our analysis by considering the simplest case where we have changing interconnection topology without delays and assuming that each node knows its (instantaneous) out-degree at that particular time instant. We start with the assumptions below (some of these assumptions are relaxed later on).
\begin{assum}
For the analysis below, the interconnection topology is described by a sequence of digraphs $\mathcal{G}[1]$, $\mathcal{G}[2]$, $\ldots$, $\mathcal{G}[k]$, $\ldots$, of the form $\mathcal{G}[k] = (\mathcal{V}, \mathcal{E}[k])$.
\begin{enumerate}
\item[(B1)] At each time instant $k$, each node $v_j$ knows the number of nodes receiving its message (i.e., the number of its out-neighbors $\mathcal{D}^{+}_{j}[k]$). \label{B11}
\item[(B2)] There exist no delays in the delivery of messages.
\item[(B3)] We can find an infinite sequence of times $t_0, t_1, \ldots, t_m, ...$, where $t_0=0, 0 <t_{m+1}-t_m \leq \ell < \infty$, with the property that for any $m \in \mathbb{Z}_{+}$ the union of graphs $\mathcal{G}[t_m], \mathcal{G}[t_m +1], \ldots, \mathcal{G}[t_{m+1}-1]$ is strongly connected.
\end{enumerate}
\end{assum}

\begin{remark}
Assumption (B1) requires that the transmitting node knows the number of nodes receiving its messages at each time instant. In an undirected graph setting, this is not too difficult; in a digraph setting, this is not as straightforward but there are many ways in which knowledge of the out-degree might be possible. For example, there can be an acknowledgement signal (ACK) via a \emph{distress signal} (special tone in a control slot or some separate control channel) sent at higher power than normal so that it is received by transmitters in its vicinity \cite{2000:bambos_channel}. Knowledge of the out-degree is also possible if the nodes periodically perform checks to determine the number of their out-neighbors (e.g., by periodically transmitting the distress signals mentioned above). As we discuss later, at the cost of little additional complexity, the nodes can also handle situations where they learn their out-degree with some delay. Assumption (B2) is made to keep things simple and is relaxed later. Assumption (B3) stems from the fact that we require that there exists paths between any pair of nodes infinitely often. 
\end{remark}

In its general form, each node updates its information state according to the following relation:
\begin{align}\label{eq:1_2}
x_{j}[k+1] =p_{jj}[k] x_{j}[k] + \sum_{v_i \in \mathcal{N}^{-}_j[k]} x_{j\leftarrow i}[k] \; , \quad k = 0, 1, \ldots
\end{align}
\noindent where $x_{j\leftarrow i}[k]\triangleq p_{ji}[k]  x_{i}[k]$ is the information sent from node $v_i$ to node $v_j$ at time step $k$, and $x_{j}[0] \in \mathbb{R}$ is the initial state of node $v_j$. Since the out-degree is known at the transmitting node, each node $v_j$ can easily set the (positive) weights $p_{lj}[k]=\frac{1}{1+\mathcal{D}^+_j[k]}$ for $v_l \in \mathcal{N}^+_j[k] \cup \{ v_j \}$ (this choice satisfies $\sum_{l=1}^n p_{lj}[k] = 1$ for all $v_j\in \mathcal{V}$ but more generally, as in the case of a fixed topology, each node only needs to ensure that the weights on its out-going links are positive and sum to unity).
%(so that they form a column stochastic matrix $P[k]$ with strictly positive entries on its diagonal) at each time step $k$. For example, in a wireless broadcast model, node $v_j$ can set $p_{jj}[k]=p_{lj}[k]=\frac{1}{1+D^+_j[k]}$, for all $v_l\in\mathcal{N}_j^+[k]$ (
Note that unspecified weights in $P[k]$ are set to zero and correspond to pairs of nodes $(v_l,v_j)$ that are not connected at time step $k$, i.e., $p_{lj}[k]=0$, for all $(v_l, v_j) \notin \mathcal{E}[k]$, $l \neq j$. If we let $x[k]=(x_1[k] \ \ x_2[k] \ \ \ldots  \ \  x_n[k] )^T$ and $P[k] = [p_{ji}[k] ] \in \mathbb{R}_{+}^{n\times n}$ then \eqref{eq:1_2} can be written in matrix form as
$
x[k+1] =P[k] x[k] ,
$
where $x[0] = (x_1[0] \ \ x_2[0] \ \ \ldots  \ \  x_n[0] )^T \equiv x_0^T$. Note that, with the specific choice of matrix $P[k]$ (based on the out-degree of each node as described above), the matrices $P[k]$ are column stochastic and have strictly positive elements on their diagonal. This fact will be important in our proof later on which utilizes Theorem~\ref{Wolfowitz} on a particular set of matrices.

\begin{remark} \label{re:distress}
% While the communication links vary, we assume that the set of nodes is set \emph{a priori} and it is fixed. Nevertheless, c
\noindent
Throughout the operation of the algorithm, communication links can be initiated or terminated by either (a) the receiving node, or (b) the transmitting node. Possible communication protocols to perform these tasks are described briefly below. \\
(a) When node $v_l$ wants to receive messages from node $v_j$ (e.g., because it is in the neighborhood of $v_j$),  it can send a distress signal to pass this request to $v_j$ (alternatively, node $v_l$ can send the message to node $v_j$ using some path in the digraph or using some sort of flooding scheme). When node $v_j$ receives the request from $v_l$, it sends an acknowledgement packet (directly to node $v_l$) and the communication link is initiated. In practice, this might not necessarily require node $v_j$ to transmit a separate package to node $v_l$ (e.g., in a wireless broadcast setting) or to transmit at a higher power (e.g., if $v_l$ is already in its range); however, it does imply that node $v_j$ will adjust its self-weight and the weights $p_{lj}$, $v_l \in \mathcal{N}_j^+$, on the links to its out-neighbors in order to ensure that column stochasticity is preserved. If, on the other hand, node $v_l$ wants to terminate the communication link, it sends (or broadcasts if there exists a single communication channel and the message cannot be specifically directed to node $v_j$) a distress signal destined for node $v_j$ (alternatively, it can use a flooding-like strategy via the paths in the digraph); as soon as node $v_l$ receives an acknowledgement from node $v_j$ along with the latest message with values for the last update, then the link can be terminated. If node $v_l$ does not receive the acknowledgement message from node $v_j$ the link remains active. \\
(b) Note that if the transmitting node $v_j$ wants to terminate a communication link to node $v_l$, it is enough to simply initiate such a request to node $v_l$ (since a direct link is available).% or even delete the entries transmitted to node $v_l$. 
%
%Under this communication protocol assumption (B1) can be materialized. Consequently, each node $v_j$ can adapt the weights $p_{lj}[k]$ for all $(v_l, v_j) \in \mathcal{E}[k]$, so that $\sum_{j=1}^n p_{lj}[k] = 1$ is satisfied.
\end{remark}

\begin{lem}\label{our_lemma2}
Consider a sequence of graphs of the form $\mathcal{G}[k]=(\mathcal{V}, \mathcal{E}[k])$), $k=0, 1, 2, ...$, such that there exists an infinite sequence of time instants $t_0, t_1, \ldots, t_m, \ldots$, where $t_0=0, 0 <t_{m+1}-t_m \leq \ell < \infty$, $m \in \mathbb{Z}_{+}$, with the property that for any $m \in \mathbb{Z}_{+}$ the union of graphs $\mathcal{G}[t_m], \mathcal{G}[t_m +1], \ldots, \mathcal{G}[t_{m+1}-1]$ is strongly connected. Let $y_j[k]$, $\forall v_j\in\mathcal{V}$, be the result of iteration~\eqref{eq:1_2}
%\begin{align}\label{eq:44}
%x_j[k+1]=p_{jj}[k]x_j[k]+ \sum_{v_i \in \mathcal{N}^{-}_j[k]}p_{ji}[k]x_i[k] \; , \; \forall  v_j \in \mathcal{V}  \; ,
%x_j[k+1]=\frac{1}{1+D_j^+}x_j[k]+ \sum_{v_i \in \mathcal{N}^{-}_j}\frac{1}{1+D_j^+}x_i[k] ,
%\end{align} 
with $p_{lj}[k] = \frac{1}{1 + \mathcal{D}_j^+[k]}$ for $v_l \in \mathcal{N}_j^+[k] \cup \{ v_j \}$ (zeros otherwise) and initial conditions $y[0]=y_0$, and let $z_j[k], \ \forall v_j\in\mathcal{V}$, be the result of iteration \eqref{eq:1_2} with $p_{lj}[k] = \frac{1}{1 + \mathcal{D}_j^+[k]}$ for $v_l \in \mathcal{N}_j^+[k] \cup \{ v_j \}$ (zeros otherwise) and with initial condition $z[0]=\mathbb{1}$. Then, the solution to the average consensus problem in the presence of dynamically changing topologies can be obtained as
$
\displaystyle \lim_{k\rightarrow \infty}\mu_j[k] = \frac{\sum_{v_\ell \in \mathcal{V}} y_0(\ell)}{|\mathcal{V}|} \; , \; \forall v_j \in \mathcal{V} \; ,
$
where $\displaystyle \mu_j[k]=\frac{y_j[k]}{z_j[k]}$.
\end{lem}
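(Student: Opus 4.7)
The plan is to mimic the approach used in Theorem~\ref{the:delay}: construct a finite family of matrices on which Wolfowitz's theorem applies, and from the resulting approximate rank-one structure conclude that $\mu_j[k] = y_j[k]/z_j[k]$ converges to the average. The role played there by the augmented delay matrices $\overline{P}[k]$ will here be played by time-aggregated \emph{block} matrices that pool the $P[k]$'s across the jointly-strongly-connected intervals from Assumption~(B3); no augmentation is needed because there are no delays.

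First I would define, for each $m \geq 0$, the block matrix $Q_m := P[t_{m+1}-1]\,P[t_{m+1}-2]\cdots P[t_m]$. Each $Q_m$ is column stochastic with strictly positive diagonal, since every $P[k]$ has $p_{jj}[k] = 1/(1+\mathcal{D}_j^+[k]) \geq 1/n$. The key structural fact is that the zero/nonzero pattern of $Q_m$ contains every edge of the union digraph $\bigcup_{k=t_m}^{t_{m+1}-1}\mathcal{G}[k]$: given any edge $(v_j, v_i) \in \mathcal{G}[s]$ with $s \in [t_m, t_{m+1}-1]$, the time-indexed walk that idles at $v_i$ via self-loops for $s - t_m$ steps, then traverses $(v_j, v_i)$ using $P[s]$, then idles at $v_j$ via self-loops contributes a strictly positive summand to $Q_m(j,i)$. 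Since Assumption~(B3) makes the union digraph strongly connected on $\mathcal{V}$, each $Q_m$ is primitive and hence SIA.

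Next I would invoke finiteness. Because $\mathcal{V}$ is fixed and $t_{m+1} - t_m \leq \ell$, the collection $\mathcal{Q}$ of possible block matrices is finite. The same self-loop padding argument, applied inductively, shows that for every word $Q_{i_M}\cdots Q_{i_1}$ in $\mathcal{Q}$ the zero/nonzero pattern contains the union of the individual graphs of $Q_{i_1}, \ldots, Q_{i_M}$; each such word is therefore column stochastic with positive diagonal and a strongly connected graph, i.e., primitive and SIA. Wolfowitz's theorem (Theorem~\ref{Wolfowitz}) then gives, for every $\epsilon > 0$, an integer $\nu(\epsilon)$ such that every word in $\mathcal{Q}$ of length at least $\nu(\epsilon)$ admits a decomposition $c\mathbb{1}^T + E$ with $|E(j,i)| < \epsilon/2$. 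A variant of the positivity statement in the second half of Proposition~\ref{proposition1}, applied with $\mathcal{Q}$ in place of $\overline{\mathcal{P}}$ (a breadth-first reachability argument shows that, as long as the current reachable set is not all of $\mathcal{V}$, strong connectivity of the next factor forces at least one new vertex to be added at each step, so after $n-1$ factors every vertex is reached), furnishes constants $N_0$ and $c_{\min} > 0$ such that any word of length at least $N_0$ is entry-wise bounded below by $c_{\min}$. Choosing $\epsilon < c_{\min}$ then forces $c(j) \geq c_{\min}/2$ uniformly in $j$.

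Finally I would handle a general time $k$ by splitting $B_k := P[k-1]\cdots P[0]$ into a completed prefix of blocks and a trailing partial block. Let $M$ be the largest index with $t_M \leq k$; then $B_k = R_k \cdot (Q_{M-1}\cdots Q_0)$, where $R_k := P[k-1]\cdots P[t_M]$ is a column stochastic matrix of length at most $\ell$ with positive diagonal. For $M$ sufficiently large the right factor equals $c\mathbb{1}^T + E$ with the properties above, so $B_k = (R_k c)\mathbb{1}^T + R_k E$; the error matrix $R_k E$ has entries of magnitude at most $n\epsilon/2$, while the vector $c_{B_k} := R_k c$ has entries bounded below by $(1/n)^\ell (c_{\min}/2) > 0$ (because $R_k(j,j) \geq (1/n)^\ell$ from its positive diagonal). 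Substituting into $\mu_j[k] = \big((\mathbb{1}^T + e_k^T)\, y[0]\big)/\big((\mathbb{1}^T + e_k^T)\,\mathbb{1}\big)$ and repeating the algebraic manipulation at the end of the proof of Theorem~\ref{the:delay}, the error terms scale with $\epsilon$ while the denominator stays bounded away from zero, yielding $\mu_j[k] \to \sum_{v_\ell \in \mathcal{V}} y_0(\ell)/|\mathcal{V}|$ uniformly over $v_j \in \mathcal{V}$. The main obstacle is the structural claim in the second paragraph that a \emph{time-indexed} product $Q_m$ already realizes every edge of its union digraph: walks in the product must respect the order of the factors, and it is precisely the uniformly positive diagonals (self-loops at every node at every time step) that allow the walk to be padded around the critical edge.
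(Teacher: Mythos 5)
Your proof is correct and follows essentially the same route as the paper: form the block products $\overline{P}_{t_{m+1}-t_m}=P[t_{m+1}-1]\cdots P[t_m]$ over the jointly strongly connected intervals, show (via the positive diagonals) that these blocks and all words in them are SIA, invoke Theorem~\ref{Wolfowitz}, and conclude as in the proof of Theorem~\ref{the:delay}. The details you spell out --- the self-loop padding argument showing the block's sparsity pattern contains the union digraph, the finiteness of the block family, the uniform lower bound on the entries of long words, and the trailing partial factor $R_k$ --- are all points the paper's proof leaves implicit in the phrase ``the proof continues as in the proof of Theorem~\ref{the:delay},'' and you fill them in correctly.
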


\begin{proof}
Let $\overline{P}_{t_{m+1}-t_m}\triangleq P[t_{m+1}-1]P[t_{m+1}-2]\ldots P[t_{m}]$. Since the union of graphs from time instant $t_m$ until $t_{m+1}-1$, i.e., the set of graphs $\mathcal{G}[t_m], \mathcal{G}[t_m +1], \ldots, \mathcal{G}[t_{m+1}-1]$, is strongly connected and each matrix involved in the product has strictly positive elements on the diagonals, matrix $\overline{P}_{t_{m+1}-t_m}$ is SIA for $m \in \mathbb{Z}_+$. Furthermore, products of matrices of the form $\overline{P}_{t_{m+1}-t_m}$ are SIA (since each such matrix has positive elements on its diagonal, the product of such matrices will have a positive entry at its $(j,i)$ position if at least one of the matrices has a positive element at its $(j,i)$ position; hence, the zero/nonzero structure of the product will correspond to strongly connected graph). Hence, according to Theorem~\ref{Wolfowitz}, for any $\epsilon > 0$, there exist a finite integer $\nu(\epsilon) \in \mathbb{N}$, such that a finite word $W$ given by the product of a collection of $\nu$ stochastic matrices of the form $\overline{P}_{t_{m+1}-t_m}$ has all of its columns approximately the same, i.e.,
%\begin{align*}
$\overline{P}_{t_{k+\nu}-t_{k+\nu-1}}\ldots \overline{P}_{t_{k+2}-t_{k+1}}\overline{P}_{t_{k+1}-t_k} = c_{W_\nu} \mathbb{1}^{T}  + E$,
%\end{align*}
where $c_{W_\nu}$ is a nonnegative column vector and matrix $E$ has entries that are bounded in absolute value by $\epsilon/2$. From this point onwards, the proof continues as in the proof of Theorem~\ref{the:delay}.
\end{proof}

\subsection{Changing interconnection topology with communication delays}

\begin{assum}
In the presence of delays, we make the following extra assumption:
\begin{enumerate}
\item[(C1)] There exists a finite $\bar{\tau}$ that uniformly bounds the delay terms, i.e. $\tau_{ji}[k] \leq \overline{\tau}_{ji} \leq \overline{\tau}$; this is the same as in assumption ({A2}). 
\end{enumerate}
\end{assum}

\noindent In this case, each node updates its information state according to the following iteration:
\begin{align}\label{eq:S1}
x_{j}[k+1] &=p_{jj}[k]x_{j}[k] +  \sum_{r=0}^{\bar{\tau} }\sum_{v_i \in \mathcal{N}^{-}_j[k-r]} x_{j\leftarrow i}[k-r]I_{k-r,ji}[r] \; , \quad k = 0, 1, 2, \ldots
\end{align}
\noindent where $x_{j\leftarrow i}[k-r] \triangleq p_{ji}[k-r] x_{i}[k-r]$ is the value sent from node $v_i$ to node $v_j$ at time step $k-r$ that occurs delay $r$, $x_{j}[0] \in \mathbb{R}$ is the initial value of node $v_j$, and the values $p_{ji}[k]\geq 0$ depend on the topology of the graph at time $k$.

To handle delays in a network of $n=|\mathcal{V}|$ nodes, we introduce $\bar{\tau} n$ nodes (for a total of $(\bar{\tau}+1)n$ nodes) so that we can write
\begin{align}\label{eq:S3}
\overline{x}[k+1]=\overline{P}[k]\overline{x}[k] \; ,
\end{align}
where (as before)
\begin{align}\label{AkS}{\small
\overline{P}[k] \triangleq
\left( \begin{array}{ccccc}
    P_0[k] & I_{n\times n} & 0 & \cdots & 0 \\
    P_1[k] & 0 & I _{n\times n}&  \cdots &  0 \\
    \vdots & \vdots & \vdots & \ddots & \vdots \\
    P_{\bar{\tau}-1}[k] & 0 & 0 & \cdots & I_{n\times n} \\
    P_{\bar{\tau}}[k] & 0 & 0 & \cdots & 0 \\
  \end{array} \right),}
\end{align}
with
\begin{align*}
\overline{x}[k]&=\left( x^T[k] \ \ x^{(1)}[k] \ \ldots \ x^{(\bar{\tau})}[k]  \right)^T , \\
x^{(r)}[k]&=\left(x_{1}^{(r)}[k]  \ \ldots \  x_{n}^{(r)}[k]   \right) , \quad r=1, 2, \ldots \bar{\tau}.
\end{align*}
As before, $P_0[k], P_1[k], \ldots, P_{\bar{\tau}}[k]$ are appropriately defined nonnegative matrices, such that
\begin{align*}
P[k]=\sum_{r=0}^{\bar{\tau}}P_r[k],
\end{align*}
i.e., the sum $P[k]$ of all the nonnegative matrices $P_r[k]$, $r \in \{0,1,2, \ldots, \bar{\tau}\}$, gives the weights of the zero-delay interconnection topology at time instant $k$. The difference from the case when only delays are present in the network is that the interconnection topology is dynamically changing and the weights at each time instant might differ (mathematically, this means that the left matrix in the above equation changes with $k$). The proposed protocol is able to asymptotically reach average consensus, as stated in Lemma \ref{our_lemma3} below. The proof is similar to the proof for delays with no changes in the interconnection topology and is omitted.

\begin{lem}\label{our_lemma3}
Consider a sequence of graphs of the form $\mathcal{G}[k]=(\mathcal{V}, \mathcal{E}[k])$), $k=0, 1, 2, ...$, such that there exists an infinite sequence of time instants $t_0, t_1, \ldots, t_m, \ldots$, where $t_0=0, 0 <t_{m+1}-t_m \leq \ell < \infty$, $m \in \mathbb{Z}_{+}$, with the property that for any $m \in \mathbb{Z}_{+}$ the union of graphs $\mathcal{G}[t_m], \mathcal{G}[t_m +1], \ldots, \mathcal{G}[t_{m+1}-1]$ is strongly connected. Let $y_j[k]$ for all $v_j \in\mathcal{V}$ be the result of iteration \eqref{eq:S1}
%\begin{align}\label{eq:11}
%x_{j}[k+1] &=p_{jj}[k]x_{j}[k] + \sum_{v_i \in \mathcal{N}^{-}_j[k-r]} \sum_{r=0}^{\bar{\tau} } x_{j\leftarrow i}[k-r]I_{k-r,ji}[r] ,
%\end{align}
with $p_{lj}[k] = \frac{1}{1 + \mathcal{D}_j^+[k]}$ for $v_l \in \mathcal{N}_j^+[k] \cup \{ v_j \}$ (zeros otherwise) and initial conditions $y[0]=y_0$, and let $z_j[k], \ \forall v_j\in\mathcal{V}$, be the result of iteration \eqref{eq:1_2} with $p_{lj}[k] = \frac{1}{1 + \mathcal{D}_j^+[k]}$ for $v_l \in \mathcal{N}_j^+[k] \cup \{ v_j \}$ (zeros otherwise) and with initial condition $z[0]=\mathbb{1}$. The indicator function $I_{k,ji}$ captures the bounded delay $\tau_{ji}[k]$ on link $(v_j, v_i)$ at iteration $k$ (as defined in \eqref{eq:indicatorfunction}, $\tau_{ji}[k] \leq \bar{\tau}$). Then, the solution to the average consensus problem can be asymptotically obtained as
$
\displaystyle \lim_{k\rightarrow \infty} \mu_j[k]=\frac{\sum_{v_l\in \mathcal{V}} y_0(l)}{|\mathcal{V}|} \; , \; \forall v_j \in \mathcal{V} \; 
$,
where $\mu_j[k]=\frac{y_j[k]}{z_j[k]}$.
\end{lem}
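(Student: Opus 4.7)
\bigskip

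\noindent\textbf{Proof proposal for Lemma~\ref{our_lemma3}.}

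The plan is to blend the augmented-graph construction used to prove Theorem~\ref{the:delay} with the interval-based Wolfowitz argument used to prove Lemma~\ref{our_lemma2}. As in the static-topology case, I would model the bounded delays by introducing $\bar\tau$ virtual copies $v_j^{(1)},\ldots,v_j^{(\bar\tau)}$ of every node $v_j\in\mathcal V$, so that iteration \eqref{eq:S1} becomes the linear recursion $\overline x[k+1]=\overline P[k]\overline x[k]$ on $(\bar\tau+1)n$ coordinates, with $\overline P[k]$ of the block form in \eqref{AkS} and the bottom-left-to-top-left blocks $P_0[k],\ldots,P_{\bar\tau}[k]$ summing to the (time-varying) column-stochastic matrix $P[k]$ associated with $\mathcal G[k]$. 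Because $\bar\tau$ is finite and the underlying node set is finite, the collection of all matrices $\overline P[k]$ that can occur draws from a finite set $\overline{\mathcal P}$, each element of which is column stochastic with strictly positive diagonal on the top-left $n\times n$ block.

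Next I would replace the role played by a \emph{single} strongly connected digraph in Proposition~\ref{proposition1} with the \emph{union} of the graphs arising over a length-$\ell$ window supplied by assumption~(B3). Concretely, for every $m$ define the block matrix
\begin{equation*}
\overline B_m \;\triangleq\; \overline P[t_{m+1}-1]\,\overline P[t_{m+1}-2]\cdots \overline P[t_m],
\end{equation*}
which is the product of at most $\ell$ matrices from $\overline{\mathcal P}$. The goal is to show that any sufficiently long word $\overline B_{m+s}\cdots \overline B_{m+1}\overline B_m$ (with $s$ large enough in terms of $n$ and $\bar\tau$) is SIA. Column stochasticity and aperiodicity follow exactly as in Proposition~\ref{proposition1}, since the top-left $n\times n$ block of every $\overline P[k]$ still carries a strictly positive diagonal. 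Indecomposability is the point where the switching enters: mimicking the block expansion in the proof of Proposition~\ref{proposition1}, the $(0,0)$ block of $\overline B_m$ dominates (entry-wise, because all matrices are nonnegative) a nonnegative combination that contains a term of the form $\bigl(\prod_{l=2}^{t_{m+1}-t_m} P_0[t_m+l-1]\bigr)P_r[t_m]$ for every delay $r\le\bar\tau$ realized on each edge active at time $t_m$, and analogous terms for every subsequent step in the window. The positivity of the diagonals of $P_0[\cdot]$ means these products preserve (and accumulate) all edges activated during $[t_m,t_{m+1}-1]$. Hence the zero/nonzero pattern of the top-left block of $\overline B_m$ contains the union graph $\bigcup_{k=t_m}^{t_{m+1}-1}\mathcal G[k]$, which by (B3) is strongly connected. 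Multiplying a bounded number of such blocks (as in the second half of the proof of Proposition~\ref{proposition1}) then produces a matrix whose first $n$ rows are uniformly positive, with a lower bound $c_{\min}>0$ that depends only on $n$, $\ell$, $\bar\tau$, and the maximum out-degree.

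With SIA established for arbitrarily long words, Theorem~\ref{Wolfowitz} yields that the forward product $\overline P[k]\cdots\overline P[1]$ satisfies $\overline P[k]\cdots\overline P[1]=c_k\mathbb 1^{T}+E_k$ with $\|E_k\|_\infty\to 0$ and with the first $n$ entries of $c_k$ bounded below by $c_{\min}$. Running the $y$-iteration and $z$-iteration with the same switching/delay schedule (so both are driven by the same $\overline P[k]$), the ratio at any non-virtual coordinate $v_j$ becomes
\begin{equation*}
\mu_j[k]=\frac{\overline y_j[k]}{\overline z_j[k]}=\frac{c_k(j)(\mathbb 1^{T}+\tilde e_k^T)\overline y[0]}{c_k(j)(\mathbb 1^{T}+\tilde e_k^T)\overline z[0]}=\frac{(\mathbb 1^{T}+\tilde e_k^T)\overline y[0]}{(\mathbb 1^{T}+\tilde e_k^T)\overline z[0]},
\end{equation*}
and since $\overline y[0]=[y_0^T\ 0\cdots 0]^T$, $\overline z[0]=[\mathbb 1^T\ 0\cdots 0]^T$, the same elementary squeeze used at the end of the proof of Theorem~\ref{the:delay} drives $\mu_j[k]$ to $\sum_\ell y_0(\ell)/|\mathcal V|$.

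The main obstacle I expect is the indecomposability step: unlike in Proposition~\ref{proposition1}, where a single strongly connected $P$ sufficed, here one must trace how edges that appear at \emph{different} times inside one window are actually committed, with positive weight, into the top-left block of $\overline B_m$ despite the intervening shifts that route mass through the virtual layers. Handling this requires choosing the window length on the order of $\ell(\bar\tau+1)$ (so that any edge realized at any moment in the window has enough downstream $P_0$-diagonal multiplications to reach the non-virtual rows) and then carefully verifying that the dominated nonnegative expansion of $\overline B_m$ captures at least one copy of every edge in the union graph, regardless of which delay value $r\in\{0,1,\ldots,\bar\tau\}$ each edge happens to realize.
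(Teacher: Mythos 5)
Your proposal is correct and follows exactly the route the paper intends: the paper omits this proof, stating only that it is ``similar to the proof for delays with no changes in the interconnection topology,'' i.e., the augmented-digraph construction of Proposition~\ref{proposition1} applied to the window products $\overline P[t_{m+1}-1]\cdots\overline P[t_m]$, whose SIA property now rests on strong connectivity of the union graph rather than of a fixed graph, followed by Theorem~\ref{Wolfowitz} and the same ratio squeeze as in Theorem~\ref{the:delay}. Your closing observation is the one point that needs care --- a single window's top-left block need not yet contain every union-graph edge, since mass injected late in the window with a large delay is still in the virtual layers at $t_{m+1}$ --- and your fix of concatenating enough windows (total length exceeding $\ell+\bar\tau$, and then $n$ such strongly connected blocks) correctly completes the argument.
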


\begin{comment}
\begin{proof}
As in the proof of Lemma~\ref{our_lemma2}, we let $\overline{P}_{t_{m+1}-t_m}=P[t_{m+1}-1]P[t_{m+1}-2]\ldots P[t_{m}]$. Since the union of graphs from time instant $t_m$, until $t_{m+1}-1$ (i.e., graphs $\mathcal{G}(t_m), \mathcal{G}(t_m +1), \ldots, \mathcal{G}(t_{m+1}-1)$) is strongly connected, then $\overline{P}_{t_{k+1}-t_k}$ is SIA. Hence, according to Theorem~\ref{Wolfowitz}, for any $\epsilon > 0$ there exist a finite integer $\nu(\epsilon) \in \mathbb{N}$, such that a finite word $W$ given by the product of a collection of $\nu$ stochastic matrices has all of its columns approximately the same, i.e.,
%\begin{align*}
$\overline{P}_{t_{k+\nu}-t_{k+\nu-1}}\ldots \overline{P}_{t_{k+2}-t_{k+1}}\overline{P}_{t_{k+1}-t_k} \rightarrow c_{W_\nu}\mathbb{1}^{T} $.
%\end{align*}
Then, the proof proceeds as in the proof of Theorem~\ref{the:delay}.
\end{proof}
\end{comment}

%\begin{remark}
We now discuss the case in which a node, say node $v_j$, receives an indication (e.g., an acknowledgement message) that one of its out-neighbors, say $v_l \in \mathcal{N}^+_j[k-1]$, no longer receives its transmissions. In other words, node $v_l \notin \mathcal{N}^+_j[k]$ but node $v_j$ finds out about it with some bounded delay that we denote by $T_{lj}[k]$. Such bounded delays could arise from communication protocols in a variety of ways, e.g., when using periodic acknowledgement signals like the distress signals discussed in Remark~\ref{re:distress}. Another way for node $v_j$ to discover that its out-degree has changed is by using acknowledgement signals that arrive at node $v_j$ through paths in the network that connect the out-neighbors of node $v_j$ to node $v_j$ (note that a direct link between an out-neighbor of node $v_j$ and node $v_j$ may not necessarily exist in a digraph).

We use $T_{lj}[k] \leq \overline{T} < \infty$ to denote the time it takes for node $v_j$ to realize that node $v_l$ is no longer in the set $\mathcal{N}_j^+[k]$. The problem that node $v_j$ has to address at time $k+T_{lj}[k]$ when it realizes that, at time steps $k$, $k+1$, $\dots$, $k+T_{lj}[k-1]$, it erroneously assumed an out-degree $\mathcal{D}_{je}^+$ that (supposing no other changes) was greater than the true out-degree $\mathcal{D}_{j}^+$, is that the weighted message from node $v_j$ that was not eventually conveyed to the out-neighbor $v_l$ (because the link was terminated) needs to be accounted for. The most straightforward way to reconcile this is to add these values back to node $v_j$. This can be done easily {\em as long as node $v_j$ keeps track of the messages it has recently transmitted ---within the last $\overline{T}$ steps--- to its out-neighbors}. Note, however, that node $v_j$ has to track these messages for each of its perceived out-neighbors, because if more than one out-neighbors terminate their links with $v_j$ (possibly at different time steps), then node $v_j$ must know what needs to be added back to its own value for each such former out-neighbor. One way to think about this in terms of the augmented digraph is that node $v_j$ adds, for each of its perceived out-neighbors,  $\overline{T}$ ``virtual" nodes that loop back to itself. These virtual nodes essentially keep track of the values that have been sent to each out-neighbor in the last $\overline{T}$ steps. The following example discusses this issue in more detail.

%\begin{comment}
Consider, for example, node $v_j$ with two out-neighbors ($\mathcal{D}_{j}^+=2$) shown in Figure~\ref{fig:exampledelayedACK_0}. Suppose that the maximum delay required for an acknowledgement signal from any out-neighbor of node $v_j$ is $2$ (i.e., $\overline{T}=2$). Then, the model for the part of the network consisting of node $v_j$ and its out-neighbors $v_{l_1}, v_{l_2}$ is as shown in Figure~\ref{fig:exampledelayedACK_0}. Suppose now that  out-neighbor $v_{l_2}$ terminates the link $v_{l_2}\leftarrow v_j$ and node $v_j$ receives an ACK with delay $2$. This means that node $v_j$ erroneously considered an out-degree of $2$ (instead of $1$) for the last $2$ updates. In this model, the message is passed through the two extra ``virtual" nodes (added in a self-loop), allowing us to loop the weighted message back to node $v_j$ (see Figure~\ref{fig:exampledelayedACK_2}). Therefore, node $v_j$ is able to recover the lost values (sent to node $v_{l_2}$ that is no longer an out-neighbor). It is worth pointing out at this point that, unlike the previous case of fixed interconnections with bounded delays, the virtual nodes are no longer simply a question of modeling; in fact, in this example, we have to ensure that node $v_j$ essentially implements the functionality of the \emph{virtual buffers} (by remembering the messages that it has sent to its out-neighbors.

\begin{figure}[h]
%\centering
\subfigure[A node $v_j$ and its two out-neighbors. The maximum delay required for an acknowledgement signal from the out-neighbors of node $v_j$ is $2$.]
{
\includegraphics[width=0.45\columnwidth]{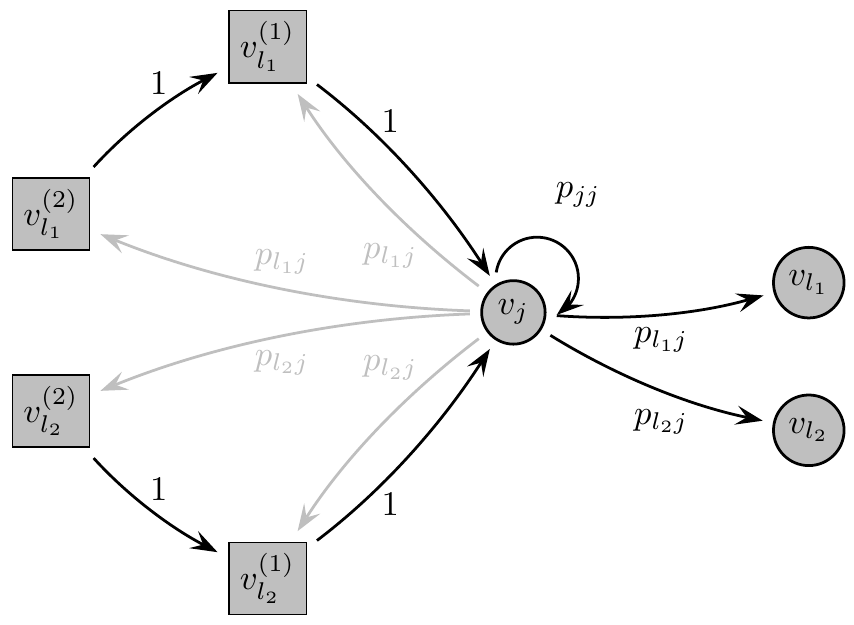}
\label{fig:exampledelayedACK_0}
}
\hfill
\subfigure[A model in which node $v_j$ directs the weighted messages of the links that no longer exist as delayed information to itself. In this example, node $v_{l_2}$ terminated the link and the ACK arrived at node $v_j$ with maximum delay $2$.]
{
\includegraphics[width=0.45\columnwidth]{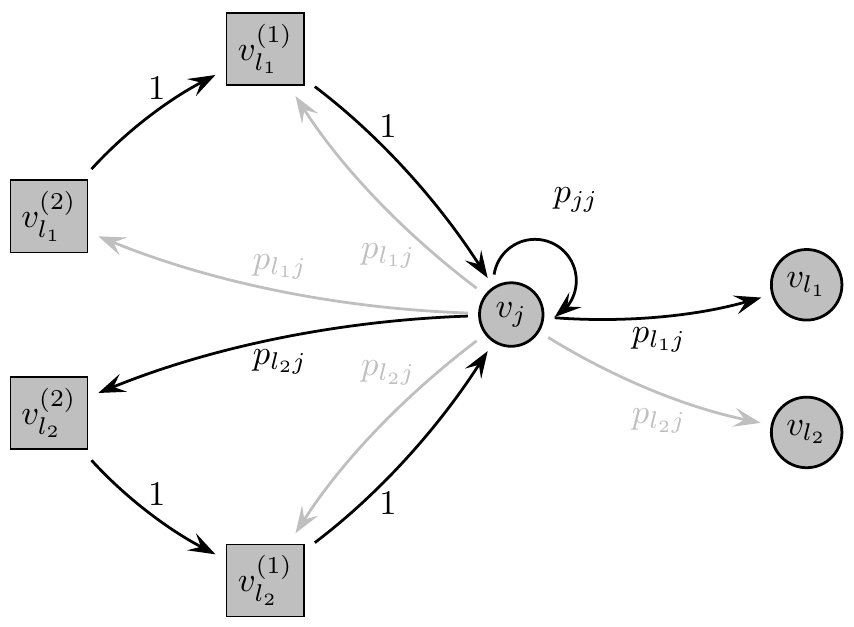}
\label{fig:exampledelayedACK_2}
}
\caption{The weighted messages from node $v_j$ that were not conveyed to the out-neighbor $v_{l_2}$ (because the link was terminated) are added back to the value of node $v_j$.
}
\label{fig:delayedACK}
\end{figure}
%\end{comment}

\begin{exam}
Consider the simple network with $3$ nodes in Figure~\ref{fig:exampledelayedACK_3} and the following scenario: at instant $k_1$ the network of the three nodes has no delays or interconnection topology changes; at time instant $k_2$, link $v_3\leftarrow v_1$ terminates, and an acknowledgement is sent to node $v_1$ via node $v_2$, from which there exists a link to node $v_1$. In addition, any message from $v_2$ to $v_1$ can be delayed by at most $1$ iteration.
\begin{figure}[h]
\begin{center}
\includegraphics[width=0.45\columnwidth]{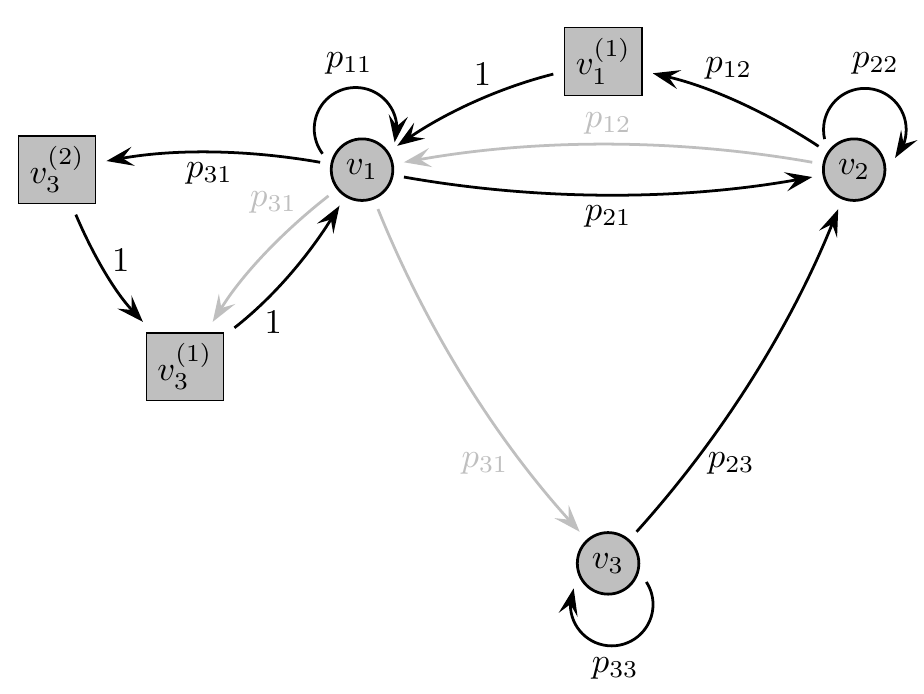}
\caption{A model in which link $v_3\leftarrow v_1$ terminates, and an acknowledgement is send to node $v_1$ via node $v_2$, from which there exists a direct link to node $v_1$. In addition, the message from $v_2$ to $v_1$ is delayed by $1$. }
\label{fig:exampledelayedACK_3}
\end{center}
\end{figure}
For simplicity, we assume that the delays in all other links are zero and the connection with link $v_2$ cannot be lost; hence, no additional loops need to be inserted. Taking $\overline{x}[k] =(x_1[k]  \ \  x_2[k]  \ \  x_3[k] \ \ x_{1}^{(1)}[k] \ \ x_{3}^{(1)}[k]  \ \  x_{3}^{(2)}[k] )^T$, the matrix representations at time instances $k_1$ and $k_2$ are captured by 
%{\small
\begin{align*}{\small
\overline{P}[k_1]=
\left( \begin{array}{cccccc}
    p_{11} & p_{12} & 0           & 1 & 1 & 0 \\
    p_{21} & p_{22} & p_{23} & 0 & 0 & 0 \\
    p_{31} & 0           & p_{33} & 0 & 0 & 0 \\
    0           & 0           & 0          & 0 & 0 & 0 \\
    0           & 0           & 0          & 0 & 0 & 1 \\
    0           & 0           & 0          & 0 & 0 & 0 
  \end{array} \right),
\ 
\overline{P}[k_2]=
\left( \begin{array}{cccccc}
    p_{11} & 0           & 0           & 1 & 1 & 0 \\
    p_{21} & p_{22} & p_{23} & 0 & 0 & 0 \\
    0           & 0           & p_{33} & 0 & 0 & 0 \\ 
    0           & p_{12} & 0          & 0 & 0 & 0 \\
    0           & 0           & 0          & 0 & 0 & 1 \\
    p_{31} & 0           & 0          & 0 & 0 & 0 
  \end{array} \right).}
\end{align*}
\end{exam}
In the general case, in a network of $n=\mathcal{V}$ nodes, we introduce $\max(\overline{\tau}n,\overline{T}n)$ nodes (for a total of $(\max(\overline{\tau}n,\overline{T}n) +n)$ nodes) and we proceed as in \eqref{eq:S3}. %Note that for our results to be valid for this case as well, it is required that there exists $p_{jj}$ at least one link that is established in one node in the network, so that there exists a nonzero diagonal element in matrix $\overline{P}[k]$, as defined in equation \eqref{AkS}.
%\end{remark}

\begin{remark}
There are also cases in which the transmitting node $v_j$ may not have knowledge of its out-degree at time instant $k$. Such situations can also be handled if, at each time instant $k$, node $v_j$ (i) knows the number of nodes with which it has established a communication link in the past and have not been officially terminated yet, and (ii) is able to multicast a table of values to each of these out-neighbors. One way to do this is to employ the communication protocol proposed in \cite{2012:Alejandro-Christoforos} where, at each time instant, each node $v_j$ broadcasts its own state (as updated via the iterations in equation \eqref{eq:1_2}), as well as the sum of all the values, called the \emph{total mass} in \cite{2012:Alejandro-Christoforos}, that have been broadcasted to each neighboring node $v_l \in \mathcal{N}_j^{+}$ so far. If, for any reason, some messages are lost (dropped) or the communication link disappears for some time-period, the total mass will enable the receiving out-neighbor to retrieve the information of the lost messages, with some time-delay. Thus, even though the communication links may not be reliable and can even change, the problem boils down to dealing with delayed information (as in Section~\ref{delays}). Note, however, that each node $v_j$ needs to keep track of its own current state, the total mass transmitted to each neighboring node $v_l \in \mathcal{N}_j^{+}[k]$ (the total mass can be different for each node $v_l$ due to, for example, newly established communication links), and the total mass received from each neighboring node $v_i \in \mathcal{N}_j^{-}[k]$ that transmits information to node $v_j$.  Since different information might need to be transmitted to each node $v_l$ at each time instant $k$, node $v_j$ is required to broadcast a table of values with entries for each receiving node.
\end{remark}

\begin{exam}
We illustrate how the algorithm operates via a small network of six nodes. Each node $v_j$ chooses its self-weight and the weight of its outgoing links at each time instant $k$ to be $(1+\mathcal{D}_j^{+}[k])^{-1}$ (such that the sum of all weights $p_{lj}[k]$, $v_l \in \mathcal{N}_j^+[k] \cup \{ v_j \}$, assigned by each node $v_j$ to links to its out-neighbors at time step $k$ is equal to $1$). First, suppose the nodes experience only changes in interconnection topology but no delays. When each node updates its information state $x_j[k]$ using equation \eqref{eq:1_2}, the information state for the whole network is given by
$%\begin{align}\label{switchingNOdelays}
x[k+1]=P[k] x[k]
$,
where $P[k]$ depends on the links present at time instant $k$. 
\begin{comment}
For this particular example,
{\small
\begin{align*}{\small
P[k]=
\left( \begin{array}{cccccc}
     p_{11}[k]  & p_{12}[k] & p_{13}[k] & p_{14}[k] & p_{15}[k] & p_{16}[k] \\
     p_{21}[k]  & p_{22}[k] & p_{23}[k] & p_{24}[k] & p_{25}[k] & p_{26}[k] \\
     p_{31}[k]  & p_{32}[k] & p_{33}[k] & p_{34}[k] & p_{35}[k] & p_{36}[k] \\
     p_{41}[k]  & p_{42}[k] & p_{43}[k] & p_{44}[k] & p_{45}[k] & p_{46}[k] \\
     p_{51}[k]  & p_{52}[k] & p_{53}[k] & p_{54}[k] & p_{55}[k] & p_{56}[k] \\
     p_{61}[k]  & p_{62}[k] & p_{63}[k] & p_{64}[k] & p_{65}[k] & p_{66}[k] \\
  \end{array} \right),}
\end{align*}}
\noindent where $p_{lj}[k]$ is equal to zero if there is no direct link from node $j$ to node $l$, and strictly positive otherwise, such that $\sum_{l=1}^n p_{lj}[k] = 1$ for all $v_j\in \mathcal{V}$. 
\end{comment}
For example, at time instants $k=k_1$ and $k=k_2$, whose interconnection topologies are captured by the graphs in Figures~\ref{directedgraph1} and~\ref{directedgraph2}, matrices $P[k_1]$ and $P[k_2]$, respectively, are given by
\begin{align*}{\small
P[k_1]=
\left( \begin{array}{cccccc}
    1/4  & 1/2 & 0    & 0    & 1/2  & 0 \\
    1/4  & 1/2 & 1/4 & 0    & 0     & 0 \\
    1/4  &  0   & 1/4 & 0    & 0     & 0 \\
    1/4  &  0   & 1/4 & 1/2 & 0    & 1/3 \\
    0     &  0   & 0     & 0    & 1/2 & 1/3 \\
    0     &  0   & 1/4 & 1/2 & 0     & 1/3 \\
  \end{array} \right), \
  P[k_2]=
\left( \begin{array}{cccccc}
    1/3  & 1/3 & 0    & 0    & 1/2  & 0 \\
    1/3  & 1/3 & 1/5 & 0    & 0     & 0 \\
    1/3  &  0   & 1/5 & 0    & 0     & 0 \\
    0     & 1/3 & 1/5 & 1/2 & 0    & 0 \\
    0     &  0   & 1/5 & 0    & 1/2 & 1/2 \\
    0     &  0   & 1/5 & 1/2 & 0    & 1/2 \\
  \end{array} \right).}
\end{align*}

\begin{figure}[h]
%\centering
\subfigure[Connections and weights at instant $k_1$.]
{
\includegraphics[width=0.45\columnwidth]{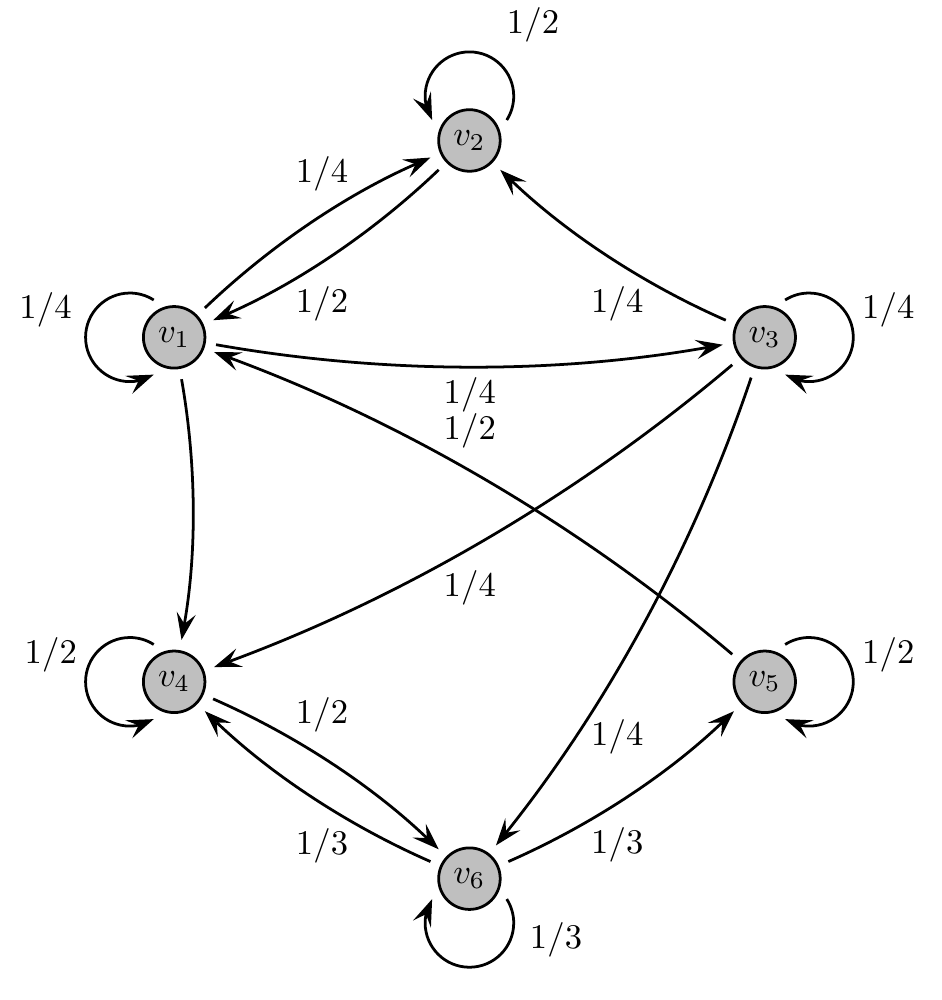}
\label{directedgraph1}
}
\hfill
\subfigure[Connections and weights at instant $k_2$.]
{
\includegraphics[width=0.45\columnwidth]{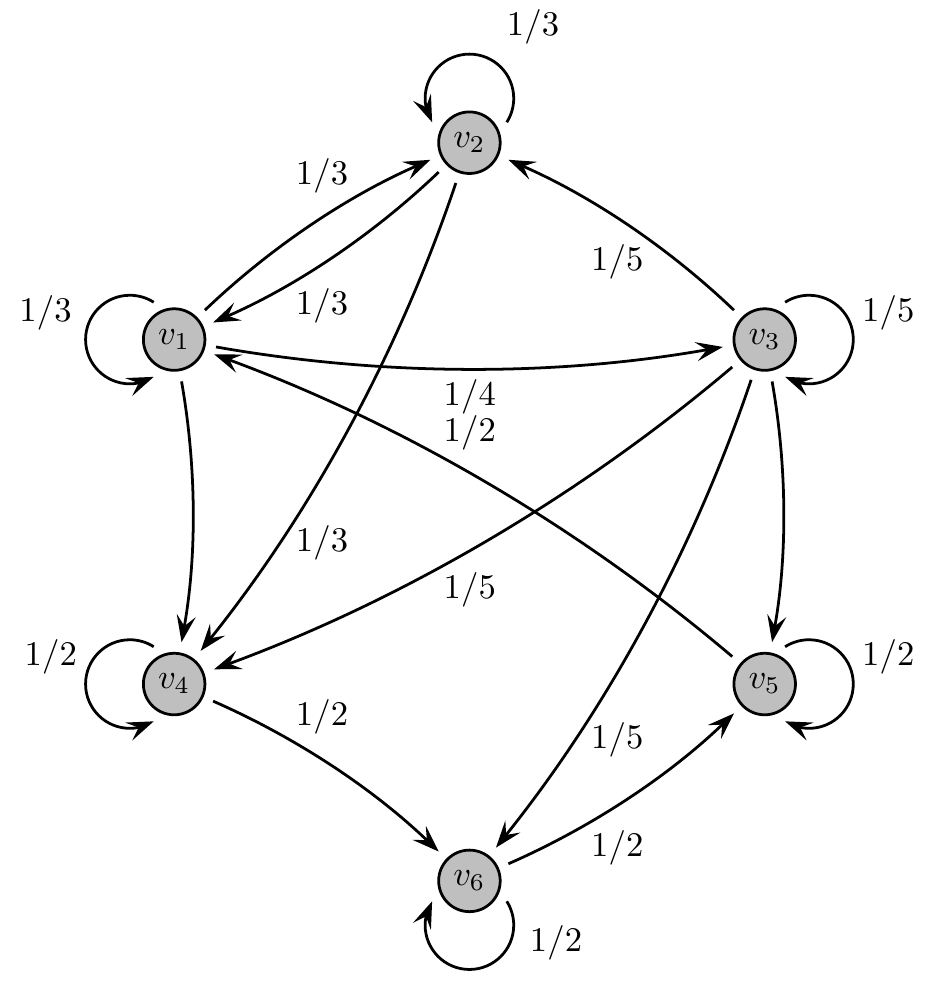}
\label{directedgraph2}
}
\caption{A network of six nodes, where each node $v_j$ chooses its self-weight and the weight of the links to its out-neighbors to be $(1+\mathcal{D}_j^{+}[k])^{-1}$.
}
\label{fig:directednet}
\end{figure}

\begin{comment}
\begin{figure}[h]
\begin{center}
\includegraphics[width=0.45\columnwidth]{directedgraph}
\caption{An instant of the network at time $k_1$ consisting of six nodes and graph connections given by the weighted links.}
\label{directedgraph1}
\end{center}
\end{figure}

\noindent At time instant $k=k_2$, $P[k_2]$ might be given by
{\small
\begin{align*}
P[k_2]=
\left( \begin{array}{cccccc}
    1/3  & 1/3 & 0    & 0    & 1/2  & 0 \\
    1/3  & 1/3 & 1/5 & 0    & 0     & 0 \\
    1/3  &  0   & 1/5 & 0    & 0     & 0 \\
    0     & 1/3 & 1/5 & 1/2 & 0    & 0 \\
    0     &  0   & 1/5 & 0    & 1/2 & 1/2 \\
    0     &  0   & 1/5 & 1/2 & 0    & 1/2 \\
  \end{array} \right),
\end{align*} }
when the graph has some links added/removed as shown in Figure \ref{directedgraph2}.
\begin{figure}[h]
\begin{center}
\includegraphics[width=0.45\columnwidth]{directedgraph_2}
\caption{An instant of the network at time $k_2$ consisting of six nodes and graph connections given by the weighted links.}
\label{directedgraph2}
\end{center}
\end{figure}
\end{comment}

We use twice the update formula \eqref{eq:1_2} with initial conditions $y[0]=(-1 \ \  1 \ \  2 \ \ 3 \ \ 4 \ \ 3)^T$ and $z[0]=\mathbb{1}^T$ respectively. In our simulations, we generate at each iteration a new random graph with six nodes that includes a directed link $(v_j, v_i)$ from node $v_i$ to node $v_j$ ($v_i, v_j \in V$, $v_i \neq v_j$) with some probability $p$ independently between different links. Note that once the graph is chosen at iteration $k$, the update matrix $P[k]$ will be column stochastic. A realization of the ratios at each node is shown in~Figure \ref{directed_s1}; in this case, the average is~$2$. When delays are present with maximum delay $\bar{\tau}=5$ we use the update formula \eqref{eq:S1} with the same initial conditions 
%for this network with arbitrary switching communication links and delays (also generated randomly with equal probability of delay at each time step), 
and we observe that the system again converges to the exact average (as shown in Figure~\ref{directed_s2}), but with a slower convergence.
\begin{figure}[h]
%\centering
\subfigure[Ratios at each node for changing interconnection topology and no delays.]
{
\includegraphics[width=0.45\columnwidth]{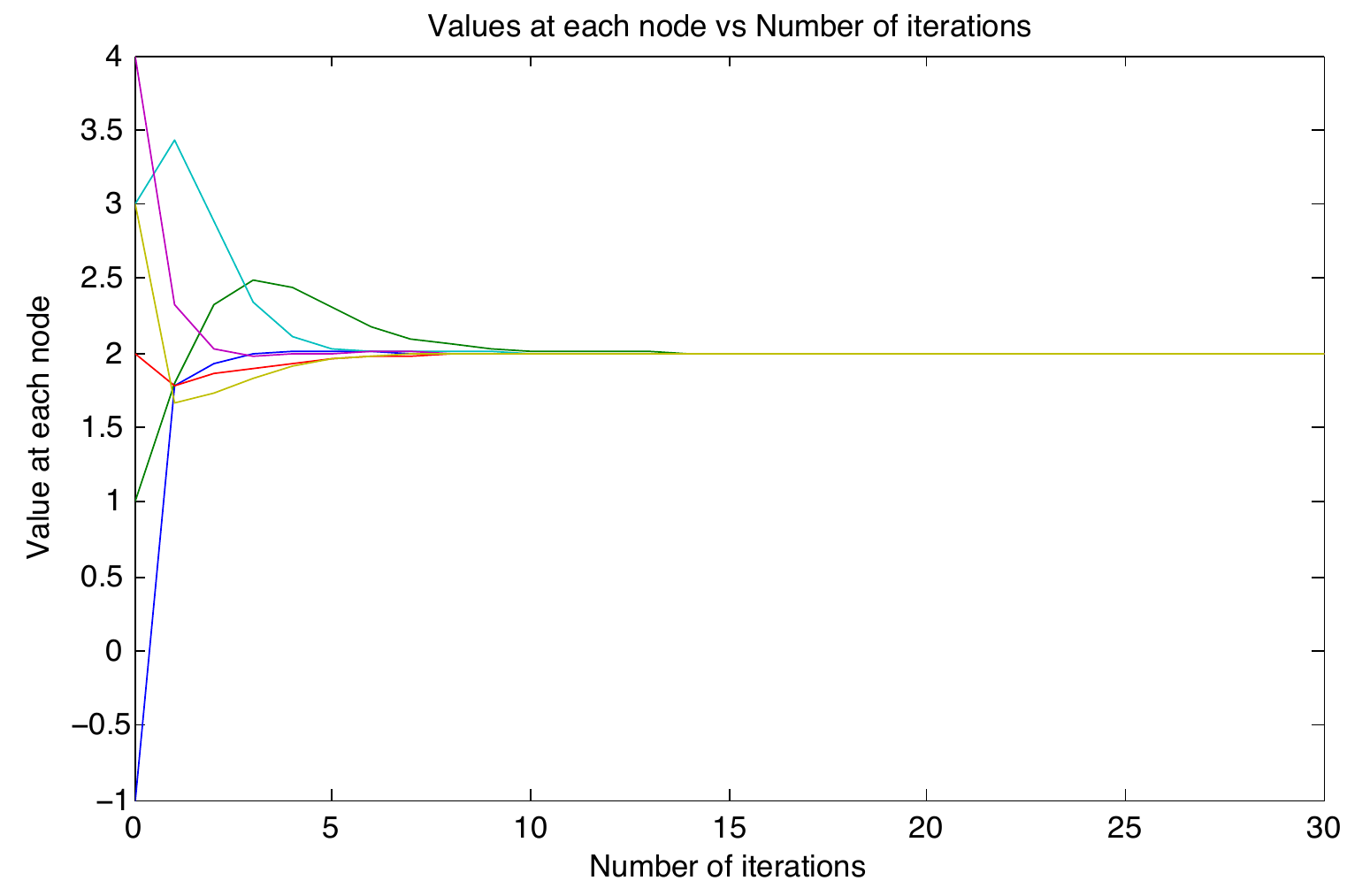}
\label{directed_s1}
}
\hfill
\subfigure[Ratios at each node for changing interconnection topology with delays.]
{
\includegraphics[width=0.45\columnwidth]{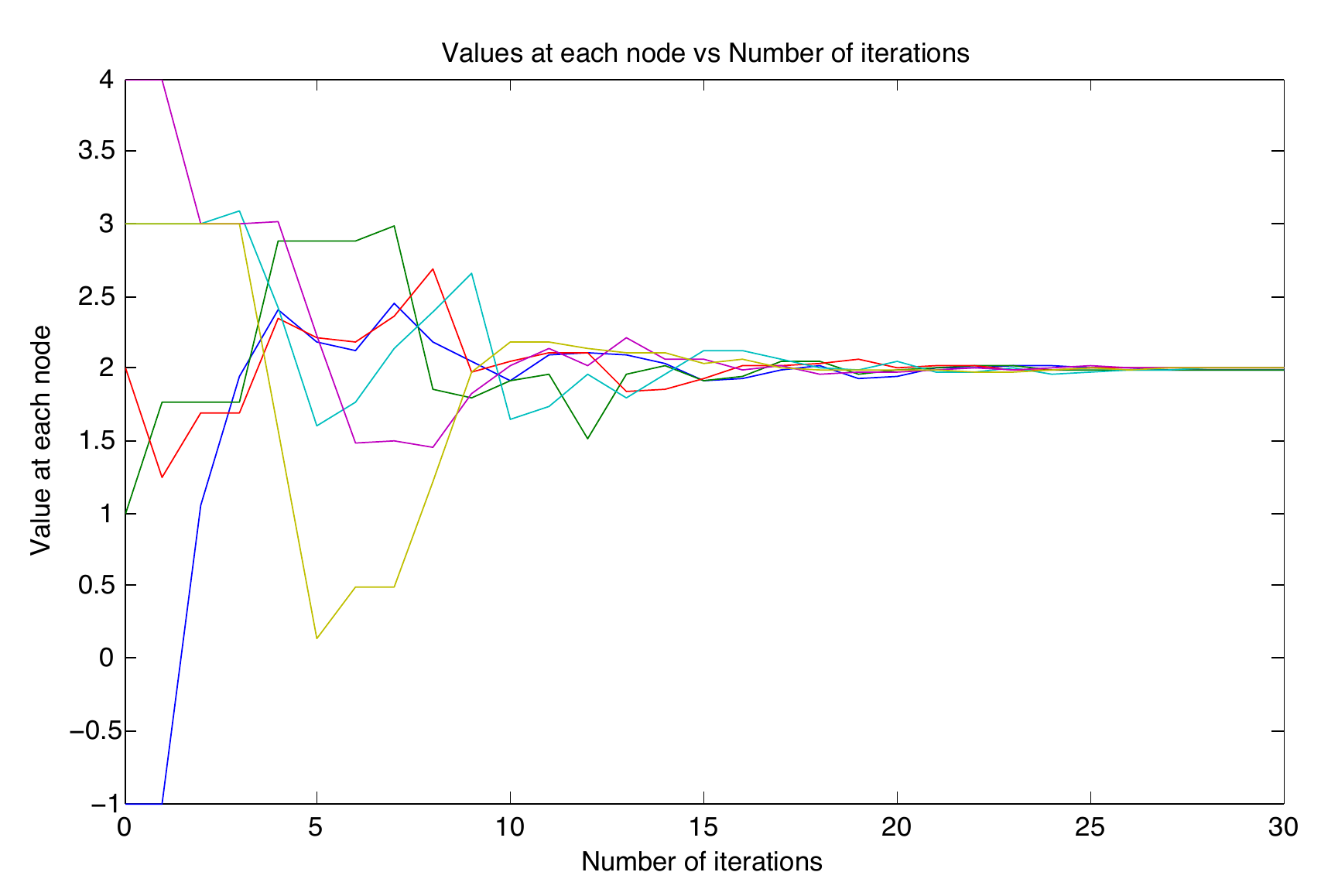}
\label{directed_s2}
}
\caption{We use twice the update formula \eqref{eq:1_2} with initial conditions $y[0]=(-1 \ \  1 \ \  2 \ \ 3 \ \ 4 \ \ 3)^T$ and $z[0]=\mathbb{1}^T$ respectively, and plot the ratio $y_j[k]/z_j[k]$ for each node $v_j$ under changing interconnection topology but no delays (left).  When delays are present with maximum delay $\bar{\tau}=5$ we use the update formula \eqref{eq:S1} with the same initial conditions and observe that the ratios again converge to the average, but with a slower convergence (right).
}
\label{directed_ss}
\end{figure}

\end{exam}

\begin{remark}
In many settings, it might be more natural for the communication protocol to allow the receiver to set the weights of the incoming values, since it is easier for each receiving node to know from which (and how many) nodes it has received a message. Indeed, consensus in multi-agent systems in digraphs in the presence of changing interconnection topology and time-varying delays have been studied in\cite{2006:XiaoWang}, which provided sufficient conditions for the multi-agent system to reach consensus when using a protocol that relies on a single iteration and uses weights that form a row stochastic matrix (compared to the two iterations and weights that form column stochastic matrices proposed in this paper). Event though \cite{2006:XiaoWang} reaches consensus, it does not necessarily reach consensus to the exact average of the initial values of the nodes; in fact, the value to which this approach (and others that rely on a single iteration and weights that form row stochastic matrices) converge depends on the delay magnitude and profile.

Technically, our approach relies on {\em weak convergence} of the backward product of column stochastic matrices whereas \cite{2006:XiaoWang} and others rely on {\em strong convergence} of the backward product of row stochastic matrices (which is equivalent to a forward product of column stochastic matrices\footnote{Note that weak and strong convergence are equivalent for forward products of column stochastic matrices (see Theorem 4.17 in the book of Seneta ``Non-negative Matrices and Markov Chains'').}). This means that in approaches that depend on row stochastic matrices the product of matrices converges to a rank one matrix (which would have to be equal to $\frac{1}{n}\mathbb{1}\mathbb{1}^T$ if convergence to the average of the initial values is desirable); on the contrary, in the proposed approach we do not have convergence of the product but we have instead weak convergence to rank one matrices. This means that for a large number of iterations, each matrix product gets closer to a rank one matrix but this rank one matrix is not necessarily the same for each iteration step. However, by running two iterations and focusing on the ratio of the two iteration values, we exploit weak convergence and are able to obtain the exact average.
\end{remark}

\section{CONCLUSIONS}\label{conclusions}

In this paper, we studied distributed strategies for a discrete-time networked system to reach asymptotic average consensus in the presence of time-delays and dynamically changing topologies. By assuming that nodes in the multi-agent system have knowledge of their out-degree (i.e., the number of nodes to which they send information to) and by modeling the time-delays using an augmented graphical model, we have shown that our proposed discrete-time strategy reaches asymptotic average consensus in a distributed fashion, in the presence of dynamically changing interconnection topology for whatever the realization of delays, as long as they are bounded and the union graph of the graph topologies over consecutive time intervals forms a strongly connected graph infinitely often. % The results are illustrated via examples and are also compared with existing algorithms suggested in the literature. We also demonstrate via simulations that our suggested strategy achieves better convergence rate than classical consensus approaches.

% ------------------------------------------------------------------------------
% Bibliography
% ------------------------------------------------------------------------------

\bibliographystyle{IEEEtran}
\bibliography{bib_allerton}

% that's all folks
\end{document}